\newif\ifproofs
\renewcommand{\ALG@name}{Algo.}
\pgfplotsset{compat=1.11}
\theoremstyle{definition}
\Crefname{corollary}{Cor.}{Cors.}
\Crefname{equation}{Eq.}{Eqs.}
\Crefname{figure}{Fig.}{Figs.}
\Crefname{tabular}{Tab.}{Tabs.}
\Crefname{table}{Tab.}{Tabs.}
\Crefname{theorem}{Thm.}{Thms.}
\Crefname{definition}{Def.}{Defs.}
\Crefname{section}{Sec.}{Secs.}
\Crefname{proposition}{Prop.}{Props.}
\Crefname{assumption}{Asm.}{Asms.}
\Crefname{example}{Ex.}{Exs.}
\newcommand{\norm}[1]{\left\Vert #1\right\Vert}
\newtheorem*{mthm*}{Main Theorem}
\Crefname{mthm}{Main Theorem}{Main Theorems}
\newtheorem{theorem}{Theorem}
\newtheorem{corollary}{Corollary}
\newtheorem{proposition}{Proposition}
\newtheorem{lemma}[theorem]{Lemma}
\newtheorem{assumption}{Assumption}
\newtheorem{example}{Example}
\newtheorem{definition}{Definition}
\Crefname{corollary}{Cor.}{Cors.}
\Crefname{equation}{Eq.}{Eqs.}
\Crefname{figure}{Fig.}{Figs.}
\Crefname{tabular}{Tab.}{Tabs.}
\Crefname{table}{Tab.}{Tabs.}
\Crefname{theorem}{Thm.}{Thms.}
\Crefname{definition}{Def.}{Defs.}
\Crefname{section}{Sec.}{Secs.}
\Crefname{proposition}{Prop.}{Props.}
\Crefname{assumption}{Asm.}{Asms.}
\Crefname{example}{Ex.}{Exs.}
\Crefname{appsec}{Appendix}{Appendices}
\renewcommand{\norm}[1]{\left\Vert #1\right\Vert}
\newcommand{\tnorm}[1]{\textstyle\left\Vert #1\right\Vert}
\newcommand{\txt}{\textstyle}
\newcommand{\x}{x}
\newcommand{\xx}{\bm{x}}
\newcommand{\y}{y}
\newcommand{\yy}{\bm{y}}
\newcommand{\zz}{\bm{z}}
\newcommand{\rr}{\bm{r}}
\newcommand{\hf}{\hat{f}}
\newcommand{\g}{\mathbf{g}}
\newcommand{\bh}{\mathbf{h}}
\newcommand{\xag}{X}
\newcommand{\xxag}{\bm{\xag}}
\newcommand{\yag}{Y}
\newcommand{\yyag}{\bm{\yag}}
\newcommand{\zag}{Z}
\newcommand{\zzag}{\bm{\zag}}
\newcommand{\rit}{\mathbb{R}}
\newcommand{\nit}{\mathbb{N}}
\newcommand{\X}{\mathcal{X}}
\newcommand{\FX}{\widetilde{\X}} 
\newcommand{\Sxag}{\overline{\X}} 
\newcommand{\T}{\mathcal{T}}
\newcommand{\I}{\mathcal{I}}
\newcommand{\G}{\mathcal{G}}
\newcommand{\GA}{\mathcal{G}(A)}
\newcommand{\M}{\mathcal{M}}
\newcommand{\N}{\mathcal{N}}
\newcommand{\ww}{\bm{w}}
\newcommand{\cc}{\bm{c}} 
\newcommand{\bv}{\bm{v}}
\newcommand{\Bcuti}{B_\mathbf{u_i}}
\newcommand{\Bcutj}{B_\mathbf{u_j}}
\newcommand{\Bdf}{B_\mathbf{f}}
\newcommand{\Bc}{B_{\cc}} 
\renewcommand{\th}{\theta} 
\renewcommand{\t}{t}
\renewcommand{\i}{i}
\newcommand{\iti}{_{\i,\t}}
\newcommand{\hxx}{\hat{\xx}}
\newcommand{\hxxag}{\hat{\xxag}}
\newcommand{\sxx}{\xx^*}  
\newcommand{\sxxag}{\xxag^*}
\newcommand{\eqd}{\triangleq}
\newcommand{\dth}{\,\mathrm{d}\th}
\newcommand{\snu}{\N} 
\newcommand{\esnu}{^{\snu}}
\newcommand{\dset}{\delta} 
\newcommand{\mdset}{\overline{\delta}} 
\newcommand{\duti}{\lambda} 
\newcommand{\mduti}{\overline{\lambda}} 
\newcommand{\stgccvut}{{\alpha}} 
\newcommand{\rlt}{\text{ri}\,}
\newcommand{\rbd}{\text{rbd}\,}
\newcommand{\aff}{\text{aff}\,}
\newcommand{\bma}{\bm{a}}
\newcommand{\bpsi}{\overline{\psi}}
\newcommand{\tG}{\tilde{\G}}
\newcommand{\hh}{\hspace{-2pt}}
\newcommand{\+}{\hspace{-2pt}+\hspace{-2pt}}
\title{\LARGE \bf
Efficient Estimation of Equilibria of \\ Large Congestion Games with Heterogeneous Players
}
\author{Cheng Wan, Paulin Jacquot, Olivier Beaude, Nadia Oudjane \thanks{Cheng Wan is with Universit\'e Paris-Sud and Inria, Paris, France (\texttt{cheng.wan.2005@polytechnique.org}). Paulin Jacquot is with EDF R\&D (OSIRIS), Inria and \'Ecole polytechnique, CNRS, Palaiseau, France (\texttt{paulin.jacquot@polytechnique.edu}). Olivier Beaude and Nadia Oudjane are with EDF R\&D (OSIRIS), Palaiseau, France (\texttt{\{olivier.beaude,  nadia.oudjane\}@edf.fr}).}
\thanks{
This work was partially supported by the PGMO-ICODE  project ``Jeux de pilotage de flexibilit\'es de consommation \'electrique : dynamique et aspect composite". }
}
\begin{document}

\maketitle

\begin{abstract}
Computing an equilibrium in congestion games can be challenging when the number of players is large. Yet, it is a problem to be addressed in practice, for instance to forecast the state of the system and be able to control it. In this work, we analyze the case of  generalized atomic congestion games, with coupling constraints, and with players that are heterogeneous through their action sets and their utility functions. We obtain an approximation of the variational Nash equilibria---a notion generalizing Nash equilibria in the presence of coupling constraints---of a large atomic congestion game by an equilibrium of an auxiliary population game, where 
 each population corresponds to a group of atomic players of the initial game.
  Because the variational inequalities characterizing the equilibrium of the auxiliary game have smaller dimension than the original problem, this approach enables the fast computation of an estimation of equilibria in a  large congestion game with thousands of heterogeneous players.
\end{abstract}

\begin{IEEEkeywords} Atomic Congestion Game - Variational Nash Equilibrium - Variational Inequalities - Population Game
\end{IEEEkeywords}



\section{Introduction}
\paragraph{Motivation}
Congestion games form a class of noncooperative games \cite{nisan2007algorithmic}. In a congestion game, each player chooses a certain quantity of each of the available resources, and pays a cost for each resource obtained by the per-unit cost of that resource multiplied by the quantity she has chosen. 
A congestion game is said to be atomic if there is a finite number of players, and nonatomic if there is a continuum of infinitesimal players. 
The particularity of congestion games is that the per-unit cost of each resource depends only on its total demand. 

Congestion games find practical applications  in various fields such as traffic management \cite{ziegelmeyer2008road}, communications \cite{scutari2012monotone,altman2006survey} and more recently in electrical systems \cite{mohsenian2010autonomous,PaulinTSG17}.

The concept of Nash equilibrium (NE) \cite{nash1950equilibrium} has emerged as the most credible outcome in the theory of noncooperative games. However, it is shown that computing a NE, when it exists, is a hard problem  \cite{ackermann2008impact,fabrikant2004complexity}.
 NEs are often characterized by some variational inequalities. Therefore, the efficiency of the computation of NEs  depends on the dimension of the variational inequalities in question, hence on the number of players and the number of constraints.   
The problem can be intractable at a large scale, when considering several thousands of heterogeneous agents, which  is often the case when describing real situations. The case of generalized Nash equilibria \cite{harker1991generalized}, when one considers coupling constraints---for instance capacity constraints---makes the problem even harder to solve. Meanwhile, coupling constraints commonly exist in real world. For example, in transportation, roads and communication channels have a limited capacity that should be considered. In the energy domain, production plants are also limited in the magnitude of variations of power, inducing some ``ramp constraints'' \cite{carrion2006computationally}.

However, estimating the outcome situation---supposed to correspond to an equilibrium---is often a priority for the  operator of the system. For instance, the operator controls some variables such as  physical or managerial parameters,  of a communication or transport network and wishes to optimize the performance of the system. 
 The computation of equilibria or their approximation is also a key aspect in bi-level programming \cite{ColsonMS2007bilevel}, where the lower level corresponds to a usually large scale game, and the upper level corresponds to a decision problem of an operator choosing optimal parameters. These parameters, such as prices or taxes, are to be applied in the low level game, with the aim of maximizing the revenue in various industrial sectors and public economics, such as highway management, urban traffic control, air industry, freight transport and radio network \cite{LabbeMS1998, BrotcorneLMS2000, BrotcorneLMS2001, CoteMS2003, elias2013joint}.

In this paper, we consider atomic congestion games with a finite but large number of players. We propose a method to compute an approximation of NEs, or variational Nash equilibria (VNEs) \cite{harker1991gne} in the presence of coupling constraints. 
The main idea is to reduce the dimension of the variational inequalities characterizing NEs or VNEs. 
The players are divided into groups with similar characteristics. Then, each group is replaced by a homogeneous population of nonatomic players. 
To provide an estimation of the equilibria of the original game, we compute a Wardrop equilibrium (WE) \cite{wardrop1952some} in the approximating nonatomic population game, or a variational Wardrop equilibrium (VWE) in the case of coupling constraints. 
The quality of the estimation depends on how well the characteristics, such as action set and cost function, of each homogeneous population approximate those of the atomic players it replaces.  

In addition to the reduction in dimension, another advantage of WE is that it is usually unique in congestion games, in contrast to NEs. 
In bi-level programming, the uniqueness of a low level equilibrium allows for clear-cut comparative statics and sensitivity analysis at the high level.

\paragraph{Related works}

The relation between NEs in large games and WEs has been studied in  Gentile et al.~\cite{gentile2017nash}. In their paper, the authors also consider atomic congestion games with coupling constraints and show, using  variational inequalities approach, that the distance between a NE and a WE converges to zero when the number of players tends to infinity. 
Their WE corresponds to an equilibrium of the game where each atomic player is replaced by a population.
The objective of our paper is different. 
We look for an approximation of NEs by reducing the dimension of the original game. To this end, we regroup many players into few homogeneous populations. 
 Our results apply to the  subdifferentiable case in contrast to the differential case considered in \cite{gentile2017nash}.

In \cite{PaulinWan2018}, Jacquot and Wan show that, in congestion games with a continuum of heterogeneous players, the WE can be approximated by a NE of an approximating game with a finite number of players. 
In \cite{PaulinWan2018nonsmooth}, those results are extended to aggregative games, a more general class of games including congestion games, furthermore with nonsmooth cost functions.

Different algorithms have been proposed to solve monotone variational inequalities corresponding to NE or WE, such as  \cite{cohen1988auxiliary,fukushima1986relaxed,zhu1993modified,facchinei2007finite,facchinei2010gnep}, and more recently \cite{yi2018asynchronous,yi2017distributed,parise2017distributed,tatarenko2018learning} and the references therein.

 The approach developed in the present paper is actually the inverse of the one taken in \cite{PaulinWan2018} and \cite{PaulinWan2018nonsmooth}: here, the WE in the auxiliary game serves as an approximation of an NE of the original large game.

\paragraph{Main contributions}
The contributions of this paper are the following.
\begin{itemize}[wide]
\item We define an approximating population game (\Cref{subsec:class}). The idea is that the auxiliary game has smaller dimension but is close enough to the original large game---quantified through the Hausdorff distance between action sets and between subgradients of players' objective functions. 
\item We show theoretically that a particular variational Wardrop equilibrium (VWE) of the  approximating population game is close to any variational Nash equilibria (VNE) of the original game with or without coupling constraints, while the computation of the former is much faster than the later because of the dimension reduction. We provide an explicit expression of the error bound of the approximating VWE (\Cref{th:main}).
\item We give auxiliary results on variational equilibria:  when the number of players is large, VNEs are close to each other (\Cref{th:VNEareclose}) and that VNEs are close to the approximating VWE (\Cref{th:bound}). This last theorem extends \cite[Thm.~1]{gentile2017nash} in the case of nondifferentiable cost functions, in the framework of congestion games.
\item Last, we provide a numerical illustration of our results (\Cref{sec:appli}) based on a practical application: the decentralized charging of electric vehicles through a demand response mechanism \cite{palensky2011demand}. 
This example illustrates the nondifferentiable case through piece-wise linear electricity prices (``block rates tariffs''), with coupling constraints of capacities and limited variations on the aggregate load profile  between time periods. 
This example shows that the proposed method is implementable and that it reduces the time needed to compute an equilibrium by computing its approximation (six times faster for an approximation with a relative error of less than $2\%$).
\end{itemize}

The remainder of this paper is organized as follows:  \Cref{sec:congestionModel} specifies the framework of congestion games with coupling constraints, and recalls the notions of variational  equilibria and monotonicity for variational inequalities, as well as several results on the existence and uniqueness of equilibria. 
\Cref{sec:approxRes} formulates the main results: \Cref{subsec:class} shows that a VWE approximates VNEs in large games and then, \Cref{subsec:class} formulates the approximating population game with the approximation measures, and gives an error bound on the VWE of the approximating game with respect to the original VNEs. 
\Cref{sec:appli} presents a numerical illustration in the framework of demand response for electric vehicle smart charging.

\section{Congestion Games with Coupling Constraints}  \label{sec:congestionModel}

\subsection{Model and equilibria}
The original game throughout this paper is an atomic splittable  congestion game, a particular sort of aggregative games where a set of resources is shared among finitely many players, and each resource incurs a cost increasing with the aggregate demand for it. The formal definition is as follows.
 \begin{definition}\label{def:atomicGame} 
An \emph{atomic splittable congestion} game $\G$ is defined by: 
\begin{itemize}[leftmargin=*,wide,labelindent=-1pt]
\item a finite set of players: $\I=\{1,\dots, i, \dots ,I\}$,
\item a finite set of resources: $\T=\{1,\dots, t, \dots,T\}$,
\item for each resource $\t$, a cost function $c_t: \rit_{+} \rightarrow \rit$,
\item for each player $\i$, a set of feasible choices: $\X_i \subset \rit^T_+$, an element $\xx_\i=(x\iti)_{t\in \T} \in \X_i$  signifies that $i$ has demand $\x_{i,t}$ for resource $t$,
\item for each player $\i$, an individual utility function $u_i: \X_i \rightarrow \rit$,
\item a coupling constraint set $A\subset \rit^T$.
\end{itemize}

We denote  by $\FX\eqd \X_1\times\dots \times \X_I$ the product set of action profiles. An action profile $\xx= (\xx_\i)_{\i\in\I}\in \FX$ induces a profile of aggregate demand for the resources, denoted by $\xxag=(\xag_t)_{t\in \T}\eqd (\sum_{\i\in\I}x\iti)_{t\in \T}$. 
We denote the set of feasible aggregate demand profiles  by:\begin{equation*}\Sxag\eqd \{ \xxag \in \rit^T  : \forall i\in \I,  \exists \xx_i \in \X_i\, \text{ s.t. }\txt\sum_{\i\in\I} \xx_\i= \xxag  \} \ . 
\end{equation*} 
With coupling constraints, the set of feasible aggregate demand profiles with coupling constraints is $\Sxag\cap A$, and the set of feasible action profiles is denoted by $\FX(A)=\{\xx \in \FX: \txt\sum_{\i\in\I} \xx_\i \in \Sxag\cap A\}$. 

Let the vector of cost functions denoted by $\cc(\xxag)=(c_t(\xag_t))_{t\in \T}$, where $c_t(\xag_t)$ is the (per-unit of demand) cost of resource $t$  when the aggregate demand for it is $\xag_t$.

Player $i$'s cost function $f_\i: \X_i \times \Sxag \rightarrow \rit$ is defined by:
\begin{equation}\label{eq:cost_player_def}
f_\i(\xx_\i,\yyag)=\sum_{\t\in\T}\x\iti c_t(Y_t) -u_\i(\xx_\i),\; \forall \xx_\i \in \X_i, \, \yyag\in \Sxag.
\end{equation} 
Given $\xx_{-i}\in \prod_{j\neq i}\FX_j$ and $\xxag_{-\i}\eqd \sum_{j\neq i}\xx_j$, player $i$'s cost is $f_\i(\xx_\i,\xx_i+\xxag_{-\i})$,
composed of the network costs and her individual utility. 

This atomic congestion game with coupling constraints is defined as the tuple $( \I, \T, \FX , A, \cc, (u_i)_{i\in\I}) $.
\end{definition}
\smallskip
%


In an atomic game, there are finitely many players whose actions are not negligible on the aggregate profile and on the objectives of other players. The term ``atomic'' is opposed to ``nonatomic'' where players have an infinitesimal weight \cite{nisan2007algorithmic}. 
The term ``splittable'' refers to the infinite number of choices of pure actions $\xx_i\in \X_i$ for each player $i$, as opposed to the unsplittable case where each player can only choose one action in a finite subset of $2^\T$ \cite{Rosenthal1973}. 
Besides, atomic splittable congestion games are particular cases of aggregative games \cite{gentile2017nash}: each player's cost function depends on the actions of the others only  through the aggregate profile $\xxag$.  
\smallskip

The following standard assumptions are adopted in this paper. 

\begin{assumption}\label{assp_convex_costs}~~
\nopagebreak

(1) For each player $\i\in\I$, the set $\X_\i$ is a convex and compact subset of $\rit^T$ with nonempty relative interior. 

(2) The cost function $c_t$ for each resource $t\in \T$ is continuous, convex and  non-decreasing on $(-\eta,+\infty)$ for a positive $\eta > 0$. 

(3) For each player $\i\in\I$,  individual utility function $u_\i$ is continuous and concave in $\xx_\i$ on $\X_i$.

(4) $A$ is a convex closed set of $\rit^T$, and $\Sxag\cap A$ is not empty.
\end{assumption}

%


An important class of atomic splittable congestion games corresponds to the case where resources constitute a \emph{parallel-arc} network \cite{orda1993competitive}. There, each player $i$ has a total demand and specific bounds on the demand that she can have for each resource so that her strategy set is given by 
$\X_\i=\{\ \xx_\i \in \rit^T_+  :  \txt\sum_\t  \x\iti =m_i \text{ and }  \underline{x}\iti \leq \x\iti \leq \overline{x}\iti\}$. Here, $m_i$ can represent the mass of data to send over different canals, or the amount of energy to consume over several time periods \cite{PaulinTSG17}. In particular, the demand is continuous and splittable, as the mass $m_i$ is split over the resources $t\in\T$.
 
 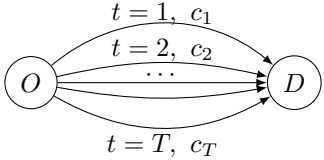
\begin{figure}[ht!]
\begin{center}
\begin{tikzpicture}[scale=0.35]
\node[draw,circle,scale=1] (a1)at(-5,0) {$O$};
\node[draw,circle,scale=1] (b1)at(5,0) {$D$};
\draw[->,>=latex] (a1) to[bend left=40] (b1);
\node[scale=1] (1) at (0,2.6) {$t=1, \ c_1$};
\node[scale=1] (1) at (0,1.25) {$t=2, \ c_2$};
\draw[->,>=latex] (a1) to[bend left=13] (b1);
\node[scale=1] (1) at (0,0.3) {$\cdots$};
\draw[->,>=latex] (a1) to[bend right=30] (b1) ;
\draw[->,>=latex] (a1) to[bend right=10] (b1);
\draw[->,>=latex] (a1) to (b1);
\node[scale=1] (1) at (0,-2.4) {$t=T, \ c_T$};
\end{tikzpicture}
\caption{A parallel-arc network with $T$ arcs/resources.\label{fig:network}}
\end{center}
\end{figure}

\vspace{-0.5cm}

Our model is more general than the one in \cite{orda1993competitive}, not only because the network topology can be arbitrary, but also because it allows for \emph{elastic demands} from the players.  
For example, player $i$'s action set can be 
$\X_\i=\{\ \xx_\i \in \rit^T_+  :  \underline{m}_i \leq \txt\sum_\t  \x\iti \leq \overline{m}_\i \}$. Indeed, the individual utility function counterbalances the network cost: a player may be willing to pay more congestion cost by increasing the demand, because she profits from a higher individual utility, and vice versa.

Let us cite two common forms of \emph{individual utility function}. The first one measures the distance between a player's choice and her preference $\yy_\i\in \X_i$: $u_i(\xx_\i)= -\omega_i \tnorm{\xx_\i-\yy_\i}^2 $, where $\omega_i>0$ is the value that the player attaches to her preference. 
The second one is
$u_i(\xx_\i)= \omega_i\log\left(1+\txt\sum_{t} \x\iti \right)$, which is increasing in the player's total demand.

Finally, in congestion games, \emph{aggregate constraints} are very common. For example, in routing games, there can be a capacity constraint linked to each arc. In energy consumption games, due to the operational constraints of the power grid, there can be both minimum and maximum consumption level for each time slot, and ramp constraints on the variation of energy consumption between time slots. This is why congestion games with aggregate constraints are of particular interest.
%
\medskip


To separate player $i$'s choice from those of the other players in her cost function, define 
\begin{equation*}
\hat{f}_i(\xx_i,\yyag) \eqd f_i(\xx_i, \yyag+\xx_i)
\end{equation*}
for $\xx_i$ in $\X_i$ and $\yyag$ in $\Sxag_{-i}=\{\sum_{j\in\I\setminus\{i\}}\xx_j:\xx_j\in \X_j\}$.


Since $\cc$ and $u_i$'s are not necessarily differentiable, we need to define the subdifferential of the players' utilities w.r.t. their actions for the characterization of equilibrium. 

%
%
%
%
Let us define two correspondences, $H$ and $H'$, from $\FX$ to $\rit^{IT}$: for any $\xx\in \FX$,
\begin{align*}
H(\xx) &\eqd\{(\bm{h}_i)_{i\in \I}\in \rit^{IT}: \bm{h}_i \in \partial_1 \hf_\i(\xx_\i, \xxag_{-\i}) , \  \forall i\in \I\} \\
&= \txt\prod_{i\in\I} \partial_1 \hf_\i(\xx_\i, \xxag_{-\i}) \  ; \\
H'(\xx) & \eqd\{(\bm{h}_i)_{i\in \I}\in \rit^{IT}: \bm{h}_i \in \partial_1 f_\i(\xx_\i, \xxag) , \  \forall i\in \I\} 
\\ &= \txt\prod_{i\in\I} \partial_1 f_\i(\xx_\i, \xxag)\  ,
\end{align*}
where $\partial_1$ signifies the partial differential w.r.t. the first variable of the function.
%
The interpretation of $H(\xx)$ is clear: $\bm{h}_i$ is a subgradient of player $i$'s utility function $\hf_i$ w.r.t. her action $\xx_i$. Let us leave the interpretation of $H'(\xx)$ till \Cref{def:pseudoVNE}. 
For the moment, let us write the explicit expression of $H$ and $H'$:
\begin{lemma}\label{prop:subgradients-sets}
For each $\xx \in \FX$:\\
\textbullet~ $\bm{h} \in H(\xx)$ if and only if there are $\g'_i\in \partial (-u_\i)(\xx_i)$ and $\bma_i \in  \prod_{t\in \T} \partial c_t(\xag_t)$  $\forall i$ s.t.:
\begin{equation*}
\bm{h}_i= \cc(\xxag)+ (\x_{i,t} a_{i,t})_t +  \g'_i \ ,\quad \forall i\in \I\  ;
\end{equation*} 
\textbullet~ $\bm{h}'\in H'(\xx)$ if and only if there is $\g'_i\in \partial (-u_\i)(\xx_i)$ $\forall i$ s.t.:
\begin{equation*}
\bm{h}'_i= \cc(\xxag) +  \g'_i \ ,\quad \forall i\in \I\ .
\end{equation*}
where  $\partial (-u_\i)(\xx_i)$ is the subdifferential of convex function $-u_i$ at $\xx_i$ and $\partial c_t(\xag_t)$ the subdifferential of $c_t$ at $\xag_t$.
\end{lemma}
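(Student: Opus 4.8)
The plan is to derive both expressions by elementary one-variable convex calculus, applied coordinate by coordinate, with \Cref{assp_convex_costs} serving only to license the calculus rules. The formula for $H'$ is immediate: for a fixed aggregate profile $\xxag$, the map $\xx_i\mapsto f_i(\xx_i,\xxag)=\sum_{t\in\T}c_t(\xag_t)\,\x_{i,t}-u_i(\xx_i)$ is the sum of the linear form $\xx_i\mapsto\sum_{t}c_t(\xag_t)\,\x_{i,t}$, whose constant gradient is $\cc(\xxag)=(c_t(\xag_t))_t$, and of the proper convex function $-u_i$. As the linear part is finite and continuous on all of $\rit^T$, the Moreau--Rockafellar sum rule applies without any constraint qualification and gives $\partial_1 f_i(\xx_i,\xxag)=\{\cc(\xxag)\}+\partial(-u_i)(\xx_i)$, which is the second bullet with $\g'_i\in\partial(-u_i)(\xx_i)$.

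For $H$, I would first unfold the definition at the point $(\xx_i,\xxag_{-i})$. From $\hat f_i(\xx_i,\yyag)=f_i(\xx_i,\yyag+\xx_i)$ and $(\xxag_{-i})_t+\x_{i,t}=\xag_t$, the function to be subdifferentiated is $\xx_i\mapsto\sum_{t\in\T}\phi_{i,t}(\x_{i,t})-u_i(\xx_i)$, where $\phi_{i,t}(s):=s\,c_t\big((\xxag_{-i})_t+s\big)$. Each term $\phi_{i,t}$ involves $\xx_i$ only through the single coordinate $\x_{i,t}$, so the congestion part is separable; the sum rule together with separability (applicable since the congestion part is finite-valued and continuous on $\rit^T_+$, and $-u_i$ is proper convex by \Cref{assp_convex_costs}) reduces the task to computing each $\partial\phi_{i,t}(\x_{i,t})$, after which $\partial_1\hat f_i(\xx_i,\xxag_{-i})=\big(\prod_{t\in\T}\partial\phi_{i,t}(\x_{i,t})\big)+\partial(-u_i)(\xx_i)$.

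The crux is the one-variable claim: writing $a:=(\xxag_{-i})_t\ge 0$, the map $\phi_{i,t}:s\mapsto s\,c_t(a+s)$ is convex on $\rit_+$ and $\partial\phi_{i,t}(s)=c_t(a+s)+s\,\partial c_t(a+s)$ for every $s\ge 0$. To get convexity I would look at the right one-sided derivative of the product: for $s\ge 0$, $\phi_{i,t}^{\prime+}(s)=c_t(a+s)+s\,c_t^{\prime+}(a+s)$ (valid because the smooth factor $s$ is non-negative), and this is non-decreasing in $s$, being the sum of the non-decreasing map $s\mapsto c_t(a+s)$ (by \Cref{assp_convex_costs}(2)) and of the map $s\mapsto s\,c_t^{\prime+}(a+s)$, a product of two non-negative non-decreasing maps on $\rit_+$ ($s\mapsto s$, and $s\mapsto c_t^{\prime+}(a+s)$, which is $\ge 0$ since $c_t$ is non-decreasing and non-decreasing since $c_t$ is convex). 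A univariate convex function has $\partial\phi_{i,t}(s)=[\phi_{i,t}^{\prime-}(s),\phi_{i,t}^{\prime+}(s)]$; combining this with $\partial c_t(\tau)=[c_t^{\prime-}(\tau),c_t^{\prime+}(\tau)]$ (a compact interval, the hypothesis $\eta>0$ being precisely what makes the left derivative exist even at $\tau=0$) and pulling the non-negative scalar $s$ through the interval yields $\partial\phi_{i,t}(s)=c_t(a+s)+s\,\partial c_t(a+s)$ (the boundary value $s=0$ is direct, since there $\phi_{i,t}$ is differentiable with $\phi_{i,t}'(0)=c_t(a)$). Substituting $s=\x_{i,t}$, $a+s=\xag_t$ gives $\partial\phi_{i,t}(\x_{i,t})=\{c_t(\xag_t)+\x_{i,t}a:a\in\partial c_t(\xag_t)\}$; feeding this into the separable decomposition reproduces exactly $\bm h_i=\cc(\xxag)+(\x_{i,t}a_{i,t})_t+\g'_i$ with $a_{i,t}\in\partial c_t(\xag_t)$ and $\g'_i\in\partial(-u_i)(\xx_i)$. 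Since every step is a set identity, the ``if and only if'' is obtained at once, with no separate reverse inclusion.

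The only step I expect to be non-mechanical is this last one: proving that the product $s\mapsto s\,c_t(a+s)$ is convex and determining its one-sided derivatives. That is precisely where the two structural hypotheses ``$\xx_i\in\rit^T_+$'' and ``$c_t$ non-decreasing'' (not merely convex) on $(-\eta,+\infty)$ enter --- indeed the formula would fail for a convex but non-monotone $c_t$. Everything else is routine bookkeeping with the sum rule and with subdifferentials of univariate convex functions.
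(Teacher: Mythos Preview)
Your approach is correct and genuinely different from the paper's. The paper invokes two chain-rule inclusions from \cite{combettes2011monotone} (Propositions~16.6 and~16.7): first $\partial_1\hat f_i(\xx_i,\xxag_{-i})\subset\{(I_T,I_T)\g:\g\in\partial\psi_i(\xx_i)\}$ for $\psi_i(\cdot)=f_i(\cdot,\xxag_{-i}+\cdot)$, then a bound on $\partial\psi_i$ in terms of the partial subdifferentials of $f_i$; this yields only the ``only if'' inclusion, and the paper closes the ``if'' direction separately by checking from the definition of subgradient that every vector of the stated form is indeed a subgradient. Your route instead exploits the \emph{separability} of the congestion part $\xx_i\mapsto\sum_t\phi_{i,t}(\x_{i,t})$ and reduces everything to one-variable convex calculus, obtaining $\partial\phi_{i,t}(s)=c_t(a+s)+s\,\partial c_t(a+s)$ via the explicit one-sided derivatives and hence equality in a single stroke, with the Moreau--Rockafellar sum rule gluing the pieces together. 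Your argument is more elementary and self-contained (no external chain-rule lemmas, no separate reverse inclusion), and it makes transparent \emph{why} the structural hypotheses $\x_{i,t}\ge 0$ and $c_t$ non-decreasing matter: they are precisely what makes $s\mapsto s\,c_t'^+(a+s)$ non-decreasing on $\rit_+$, hence $\phi_{i,t}$ convex there. The paper's abstract route, by contrast, would transfer more readily to non-separable aggregative costs, at the price of relying on the cited reference and of needing the second step.
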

\begin{proof}
See \Cref{app:propSubgradients}.
\end{proof}
\smallskip

In our framework with coupling constraints, the notion of Nash equilibrium (NE) \cite{nash1950equilibrium} is replaced by that of Generalized Nash  Equilibrium (GNE): $\xx\in \FX(A)$ is a GNE if, for each player $i$, $\hf_i(\xx_i, \xxag_{-i})\leq \hf_i(\yy_i, \xxag_{-i})$ for all $\yy_i$ s.t. $\yy_i+\xxag_{-i}\in \Sxag\cap A$. For atomic games, a special class of GNE is called Variational Nash Equilibria \cite{harker1991generalized,Kulkarni2012vne}, which enjoys some symmetric properties and can be easily characterized as the solution of the VI \eqref{cond:ind_opt_ve} below.

\begin{definition}[Variational Nash Equilibrium (VNE), \cite{harker1991gne}]\label{def:ve-finite}
A VNE   is a solution $\hxx\in \FX(A)$ to the following GVI problem: 
 \begin{align}\label{cond:ind_opt_ve}
\exists\, \g\in H(\hxx)  \text{ s.t. } 
&  \txt \big\langle \g, \xx- \hxx\big\rangle\geq 0,\; \forall \xx \in \FX(A).
 \end{align}
In particular, if  $\Sxag\subset A$,  a VNE is a NE.
\end{definition}
In this paper, we adopt VNE  as the equilibrium notion in the presence of aggregate constraints. 

As the  first step of approximation, let us define a nonatomic congestion game $\GA'$ associated to $\GA$. Let  each player $i$ be replaced by a continuum of identical nonatomic players, represented by  interval $[0,1]$ with each point thereon corresponding to a nonatomic player. Each player in population $i$ has action set $\X_i$ and individual utility function $u_i$. 
\begin{definition}\label{def:pseudoVNE}
A \emph{symmetrical variational Wardrop equilibrium} (SVWE) of $\GA'$  is  a solution to the following GVI:
\begin{align}\label{eq:def-pseudo}
  \exists \g\in H'(\sxx) \text{ s.t. } 
 & \langle \g, \xx-\sxx \rangle\geq 0,\; \forall \xx\in \FX(A)\ .
 \end{align}
\end{definition}
For the definition of variational Wardrop equilibrium (VWE) and further discussion, we refer to \cite{PaulinWan2018nonsmooth}. In particular, a VWE is characterized by an infinite dimensional variational inequality. Here, we consider only those VWE where all the nonatomic players in population $i$ take the same action $\xx_i$. Such a SVWE exists because the players are identical in the same population. 

The second interpretation of SVWE is the following: when the number of players is very large so that the individual contribution of each player on the aggregate action $\xxag$ is almost negligible, the term $x_{i,t}a_{i,t}$ in $\bm{h}\in H(\xx)$ is so small that $H(\xx)$ can be approximated by $H'(\xx)$ (cf. \Cref{prop:subgradients-sets}). This is the interpretation adopted in  \cite{gentile2017nash}. 
However, note that a SVWE of $\GA'$ is not an equilibrium of $\GA$ in the sense of a ``stable state'' for the atomic congestion game. 

\smallskip

The existence of equilibria defined in \Cref{def:ve-finite,def:pseudoVNE} are obtained without more conditions than \Cref{assp_convex_costs}:
\begin{proposition}[Existence of equilibria]\label{prop:exist_ve}
  Under \Cref{assp_convex_costs}, $\GA$ (resp. $\GA'$) admits a VNE (resp. SVWE). 
\end{proposition}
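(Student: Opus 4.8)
The plan is to recognise that, by \Cref{def:ve-finite} and \Cref{def:pseudoVNE}, both a VNE of $\GA$ and a SVWE of $\GA'$ are solutions of one and the same kind of generalized variational inequality: find $\hxx\in K$ and $\g\in\Phi(\hxx)$ with $\langle\g,\xx-\hxx\rangle\geq0$ for all $\xx\in K$, where $K\eqd\FX(A)$ and $\Phi$ is the correspondence $H$ (for the VNE) or $H'$ (for the SVWE). Existence will then follow from a set-valued version of the Hartman--Stampacchia theorem: \emph{if $K\subset\rit^{IT}$ is nonempty, convex and compact and $\Phi:K\rightrightarrows\rit^{IT}$ is upper semicontinuous with nonempty, convex and compact values, then such a pair $(\hxx,\g)$ exists} (obtained from Kakutani's fixed-point theorem, or from Ky~Fan's minimax inequality; see e.g.\ \cite{facchinei2007finite}). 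It therefore suffices to verify the hypotheses, first on $K$ and then on $H$ and $H'$.

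First I would check that $K=\FX(A)$ is nonempty, convex and compact. Nonemptiness is exactly \Cref{assp_convex_costs}(4): picking any $\xxag\in\Sxag\cap A$ and, by the definition of $\Sxag$, a profile $(\xx_i)_i\in\FX$ with $\sum_{i\in\I}\xx_i=\xxag$, one obtains a point of $\FX(A)$. Convexity follows from \Cref{assp_convex_costs}(1),(4): each $\X_i$ is convex, hence so is $\FX=\prod_i\X_i$; $\Sxag=\sum_i\X_i$ is a Minkowski sum of convex sets, hence convex; $A$ is convex; and $\xx\mapsto\sum_i\xx_i$ is affine. Compactness follows because $\FX$ is a finite product of compact sets (\Cref{assp_convex_costs}(1)) and the constraint $\sum_i\xx_i\in A$ carves out a closed subset ($A$ is closed), so $\FX(A)$ is a closed subset of the compact set $\FX$.

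Next, I would use \Cref{prop:subgradients-sets} to write $H(\xx)=\prod_{i\in\I}\big(\cc(\xxag)+(\x_{i,t}\,\partial c_t(\xag_t))_{t\in\T}+\partial(-u_i)(\xx_i)\big)$ and $H'(\xx)=\prod_{i\in\I}\big(\cc(\xxag)+\partial(-u_i)(\xx_i)\big)$, and show these are upper semicontinuous on $K$ with nonempty, convex and compact values. Indeed $\xx\mapsto\xxag=\sum_i\xx_i$ is linear and $\cc$ is continuous by \Cref{assp_convex_costs}(2), so $\xx\mapsto\cc(\xxag)$ is a continuous single-valued map; since $\xag_t\geq0>-\eta$ on $\FX$, each $c_t$ is a finite convex function on a neighbourhood of $\xag_t$, so $\partial c_t(\xag_t)$ is a nonempty compact interval and $\xag_t\mapsto\partial c_t(\xag_t)$ is upper semicontinuous with compact convex values, properties preserved after multiplication by the continuous scalar $\x_{i,t}$; and $\partial(-u_i)$, being the subdifferential of the finite convex function $-u_i$ (\Cref{assp_convex_costs}(3)), is locally bounded with nonempty, convex, compact values and closed graph, hence upper semicontinuous. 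Finite sums and Cartesian products of such correspondences retain all these properties, so $H$ and $H'$ are Kakutani maps and the existence theorem applies.

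The step I expect to be the main obstacle is the last one: one must be careful that $\partial(-u_i)(\xx_i)$ is genuinely nonempty and bounded at \emph{every} $\xx_i\in\X_i$, including the relative boundary, and that $H$ and $H'$ have closed graphs---this is where compactness of the $\X_i$'s together with the standing regularity assumptions (in particular the nonempty-relative-interior condition of \Cref{assp_convex_costs}(1)) are used; everything else is routine. As a side remark, the SVWE part admits a shortcut avoiding set-valued fixed points: a SVWE is precisely a minimiser over $\FX(A)$ of the convex, continuous function $\Theta(\xx)=\sum_{t\in\T}\int_0^{\xag_t}c_t(s)\,\mathrm{d}s-\sum_{i\in\I}u_i(\xx_i)$, whose minimiser exists by Weierstrass's theorem and whose first-order optimality condition on the convex set $\FX(A)$ is exactly \eqref{eq:def-pseudo}, since $\int_0^{\cdot}c_t$ is $C^1$ with derivative $c_t$ and hence the $i$-th block of $\nabla\big(\sum_t\int_0^{\xag_t}c_t\big)$ equals $\cc(\xxag)$.
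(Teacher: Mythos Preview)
Your proposal is correct and follows essentially the same route as the paper's own proof, which simply observes that $H$ and $H'$ are nonempty, convex, compact valued and upper hemicontinuous on the nonempty convex compact $\FX(A)$ and then invokes \cite[Cor.~3.1]{chanpang1982gqvip} for GVI existence. Your potential-function shortcut for the SVWE is a valid and pleasant alternative that the paper does not mention.
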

\proof: see \Cref{app:proof:exist_ve}.
\smallskip

Before discussing the uniqueness of equilibria, let us recall some relevant monotonicity assumptions.
\begin{definition}\label{def:mono_non}
A correspondence $\Gamma:\FX \rightrightarrows \rit^T$ is:
\begin{itemize}[wide,leftmargin=0pt,labelindent=0pt]
\item\emph{monotone} if for all $\xx, \yy \in \FX , \g\in \Gamma(\xx), \bh \in \Gamma(\yy)$:
\begin{equation}\label{cd:mono_non}
\txt\sum_{i\in \I} \langle \g_i - \bh_i, \xx_i - \yy_i \rangle \geq 0\ ;
\end{equation}

\item \emph{strictly monotone} if the equality in \eqref{cd:mono_non} holds \textit{iff} $\xx=\yy$;

\item \emph{aggregatively strictly monotone} if the equality in \eqref{cd:mono_non} holds \textit{iff} $\sum_i\xx_i=\sum_i\yy_i$;
\item $\stgccvut$-\emph{strongly monotone} if $\stgccvut>0$ and, for all $\xx, \yy \in \FX$:
\begin{equation}\label{cd:strong_mono_non}
\sum_{i\in \I} \langle  \g_i -\bh_i, \xx_i - \yy_i \rangle \hh\geq\hh \stgccvut\|\xx-\yy\|^2, \, \forall\g\!\in\! \Gamma(\xx), \bh\! \in\! \Gamma(\yy)\, ;
\end{equation}

\item $\beta$-\emph{aggregatively strongly monotone} on $\FX$ if $\beta>0$ and, for all $\xx, \yy \in \FX$ with $\xxag=\sum_i\xx_i$, $\yyag=\sum_i\yy_i$:
\begin{equation}\label{cd:strong_agg_mono_non}
 \sum_{i\in \I} \langle \g_i - \bh_i, \xx_i - \yy_i \rangle\! \geq\! \beta\|\xxag -\yyag\|^2 , \, \forall\g\!\in\! \Gamma(\xx), \bh\! \in\! \Gamma(\yy)\, .
\end{equation}
\end{itemize}

\end{definition}

If $T=1$,  ``monotone" corresponds to ``increasing". 
Besides, (aggregatively) strict monotonicity implies monotonicity, while strong (resp. aggregatively strong) monotonicity implies strict (resp. aggregatively strict) monotonicity.
\smallskip

%


In \Cref{th:unique_vwe} below, we recall some existing results concerning the uniqueness of VNE and SVWE, according to the  monotonicity of  $H$ and $H'$:
\begin{proposition}[Uniqueness of equilibria]\label{th:unique_vwe}%
Under \Cref{assp_convex_costs}: \\
(1) if $H$ (resp. $H'$) is strictly monotone, then $\GA$ (resp. $\GA'$) has a unique VNE (resp. SVWE); \\
(2) if  $H$ (resp. $H'$) is aggregatively strictly monotone, then all VNE (resp. SVWE) of $\GA$ (resp. $\GA'$) have the same aggregate profile;\\
(3) if $H$ (resp. $H'$) is only aggregatively strictly monotone but, in addition, for each $\i\in \I$, $u_\i(\xx)$ is strictly concave, then there is at most one NE (resp. WE) in the case without aggregative constraint.
\end{proposition}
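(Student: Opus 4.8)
The plan is to run the classical ``two equilibria'' monotonicity argument for parts~(1)--(2), then refine it with a decoupling step for part~(3); the only nontrivial point is a sign estimate on the congestion terms that occurs in~(3).

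For parts~(1) and~(2): let $\hxx^1,\hxx^2\in\FX(A)$ be two VNE of $\GA$ (they exist by \Cref{prop:exist_ve}), and let $\g^1\in H(\hxx^1)$, $\g^2\in H(\hxx^2)$ be the selections for which \eqref{cond:ind_opt_ve} holds. Substituting $\xx=\hxx^2$ into the VI solved by $\hxx^1$ and $\xx=\hxx^1$ into the VI solved by $\hxx^2$, then adding, gives $\sum_{i\in\I}\langle\g^1_i-\g^2_i,\hxx^1_i-\hxx^2_i\rangle\le0$. If $H$ is strictly monotone, the same sum is $\ge0$ with equality only when $\hxx^1=\hxx^2$, so $\hxx^1=\hxx^2$: this is~(1). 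If $H$ is only aggregatively strictly monotone, equality forces $\sum_i\hxx^1_i=\sum_i\hxx^2_i$, i.e.\ all VNE share the same aggregate profile: this is~(2). The statements for $\GA'$ and SVWE follow from the identical computation, with \eqref{eq:def-pseudo} replacing \eqref{cond:ind_opt_ve} and $H'$ replacing $H$. Besides \Cref{assp_convex_costs}, this step uses only the monotonicity hypothesis and \Cref{prop:exist_ve}.

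For part~(3): since there is no aggregate constraint, $\FX(A)=\FX=\prod_i\X_i$, so the test vector in \eqref{cond:ind_opt_ve} can be chosen to perturb a single player. Running the previous step with $\xx=(\hxx^2_i,\hxx^1_{-i})$ and $\xx=(\hxx^1_i,\hxx^2_{-i})$ gives, for every $i$, $\langle\g^1_i-\g^2_i,\hxx^1_i-\hxx^2_i\rangle\le0$; summing over $i$ and invoking aggregative strict monotonicity of $H$, the total equals $0$, so each term is $0$ and $\hxxag^1=\hxxag^2$. Denote this common aggregate by $\hxxag$. Apply \Cref{prop:subgradients-sets} at $\hxxag$: there exist $\bma^k_i\in\prod_{t\in\T}\partial c_t(\hxag_t)$ and $\bm q^k_i\in\partial(-u_i)(\hxx^k_i)$ (playing the role of $\g'_i$ there) with $\g^k_i=\cc(\hxxag)+(\hx^k_{i,t}a^k_{i,t})_t+\bm q^k_i$. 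The terms $\cc(\hxxag)$ cancel, leaving for each $i$
\[
0=\sum_{t\in\T}\bigl(\hx^1_{i,t}a^1_{i,t}-\hx^2_{i,t}a^2_{i,t}\bigr)\bigl(\hx^1_{i,t}-\hx^2_{i,t}\bigr)+\langle\bm q^1_i-\bm q^2_i,\,\hxx^1_i-\hxx^2_i\rangle .
\]
The second term is $\ge0$ by monotonicity of $\partial(-u_i)$, and, since $u_i$ is strictly concave (so $-u_i$ is strictly convex), it is $>0$ unless $\hxx^1_i=\hxx^2_i$. Provided the first (congestion) term is $\ge0$, the identity forces both to vanish, hence $\hxx^1_i=\hxx^2_i$ for every $i$, i.e.\ $\hxx^1=\hxx^2$. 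For $\GA'$ the correspondence $H'$ carries no congestion term (\Cref{prop:subgradients-sets}), so once the aggregates coincide the identity already reads $\langle\bm q^1_i-\bm q^2_i,\hxx^1_i-\hxx^2_i\rangle=0$, and uniqueness of the WE is immediate.

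The main obstacle is the non-negativity, for each fixed $i$, of $\sum_{t\in\T}(\hx^1_{i,t}a^1_{i,t}-\hx^2_{i,t}a^2_{i,t})(\hx^1_{i,t}-\hx^2_{i,t})$ in the NE case. When the $c_t$ are differentiable this is immediate: then $a^1_{i,t}=a^2_{i,t}=c_t'(\hxag_t)\ge0$ and the sum equals $\sum_{t}c_t'(\hxag_t)(\hx^1_{i,t}-\hx^2_{i,t})^2\ge0$. In the subdifferentiable case the selections $a^k_{i,t}\in\partial c_t(\hxag_t)$ at the two equilibria need not coincide, and the sign has to be recovered from the first-order conditions themselves: the fact that $-\g^k_i$ lies in the normal cone to $\X_i$ at $\hxx^k_i$ constrains the admissible $a^k_{i,t}$ (determining them along the coordinates where $\hxx^k_i$ is not on the boundary of $\X_i$), after which one argues coordinate by coordinate---using $\hx^k_{i,t}\ge0$ and $\partial c_t(\hxag_t)\subset\rit_+$---that the cross terms cannot turn the sum negative. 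Equivalently, one may bypass the selections altogether and work directly with player~$i$'s best-response problem $\min_{\xx_i\in\X_i}\sum_{t}x_{i,t}c_t(x_{i,t}+\hxag_t-\hx^k_{i,t})-u_i(\xx_i)$, whose objective is strictly convex ($x_{i,t}c_t(x_{i,t}+b)$ is convex in $x_{i,t}\ge0$ for fixed $b\ge0$, and $-u_i$ is strictly convex), comparing its two minimizers. This coordinate bookkeeping is where the real work lies; everything else is routine manipulation of the monotonicity inequalities.
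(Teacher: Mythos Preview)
Your argument for parts~(1) and~(2) is exactly the paper's: take two solutions, plug each into the other's variational inequality, add, and invoke (aggregative) strict monotonicity. Nothing to add there.

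For part~(3) your route diverges from the paper's, and the paper's is considerably shorter. After establishing in~(2) that the two equilibria share the same aggregate $\sxxag$, the paper simply observes that, in the absence of coupling constraints, at a WE each $\sxx_i$ must minimize $f_i(\cdot,\sxxag)=\langle\cdot,\cc(\sxxag)\rangle-u_i(\cdot)$ over $\X_i$; since $u_i$ is strictly concave this function is strictly convex, so its minimizer is unique and depends only on $\sxxag$. Uniqueness of the WE follows in one line, without any per-player subgradient decomposition or sign analysis. Your approach---per-player VI, expand via \Cref{prop:subgradients-sets}, then split into congestion and utility parts---reaches the same conclusion for WE (since $H'$ has no congestion term), but with more machinery than needed.

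Where your extra work pays off is the NE case. The paper asserts that ``the proof for VNE is the same,'' but strictly speaking at an NE player~$i$ minimizes $\hf_i(\cdot,\xxag_{-i})$, and $\xxag_{-i}=\xxag-\xx_i$ differs between the two equilibria even when $\xxag$ is common; so the one-line argument does not transfer verbatim. You correctly isolate the obstruction as the congestion sum $\sum_t(\hx^1_{i,t}a^1_{i,t}-\hx^2_{i,t}a^2_{i,t})(\hx^1_{i,t}-\hx^2_{i,t})$ with $a^k_{i,t}\in\partial c_t(\hxag_t)$ possibly distinct, and you note that in the differentiable case this collapses to $\sum_t c_t'(\hxag_t)(\hx^1_{i,t}-\hx^2_{i,t})^2\ge0$. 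Your sketch for the nonsmooth case, however, is not a proof: neither the ``normal-cone bookkeeping'' nor the ``compare the two best-response problems'' suggestion is carried out, and the second one does not obviously close (the two problems have different data $\hxag_t-\hx^k_{i,t}$, so strict convexity of each alone does not force the minimizers to coincide). So for NE in the subdifferentiable setting you have identified a genuine gap---one that the paper's own proof also leaves open---rather than filled it.
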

\proof see \Cref{app:proof-uniquenessVWE}.
\smallskip

\Cref{prop:monotonemap} below gives sufficient conditions for the (strong) monotonicity to hold for $H'$.
\begin{proposition}[Monotonicity of $H'$]\label{prop:monotonemap}
Under \Cref{assp_convex_costs}, \\
(1)  $H'$  is monotone.\\
(2) If for each $i \in \I$, $u_i$ is $\stgccvut_\i$-strongly concave,  then $H'$ is $\stgccvut$-strongly monotone with $\stgccvut\eqd \min_{\i\in\I}\stgccvut_\i$.\\
(3) If for each $t\in\T$,  $c_t$ is $\beta_t$-strictly increasing, then $H'$ is $\beta$-aggregatively strongly monotone with  $\beta\eqd \min_{t\in\T}\beta_t$.
\end{proposition}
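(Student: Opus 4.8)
The plan is to reduce all three claims to the explicit description of $H'$ given by \Cref{prop:subgradients-sets}, together with two elementary facts: the subdifferential of a convex function is a monotone operator, and strong convexity strengthens this monotonicity by a quadratic term. Fix $\xx,\yy\in\FX$, $\g\in H'(\xx)$ and $\bh\in H'(\yy)$. By \Cref{prop:subgradients-sets} there are $\g'_i\in\partial(-u_i)(\xx_i)$ and $\bh'_i\in\partial(-u_i)(\yy_i)$ with $\g_i=\cc(\xxag)+\g'_i$ and $\bh_i=\cc(\yyag)+\bh'_i$ for every $i\in\I$. I would then split the monotonicity test sum as
\[
\sum_{i\in\I}\langle\g_i-\bh_i,\xx_i-\yy_i\rangle = S_1+S_2,\qquad S_1\eqd\sum_{i\in\I}\langle\cc(\xxag)-\cc(\yyag),\xx_i-\yy_i\rangle,\quad S_2\eqd\sum_{i\in\I}\langle\g'_i-\bh'_i,\xx_i-\yy_i\rangle,
\]
and estimate the two pieces separately.

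For $S_1$, since $\cc(\xxag)-\cc(\yyag)$ does not depend on $i$ and $\sum_{i\in\I}(\xx_i-\yy_i)=\xxag-\yyag$, I get $S_1=\langle\cc(\xxag)-\cc(\yyag),\xxag-\yyag\rangle=\sum_{t\in\T}\bigl(c_t(\xag_t)-c_t(\yag_t)\bigr)(\xag_t-\yag_t)$. As $\xag_t,\yag_t\ge 0>-\eta$ and $c_t$ is non-decreasing on $(-\eta,+\infty)$ by \Cref{assp_convex_costs}(2), every summand is nonnegative, so $S_1\ge 0$; and if each $c_t$ is $\beta_t$-strictly increasing, each summand is at least $\beta_t(\xag_t-\yag_t)^2$, hence $S_1\ge\beta\norm{\xxag-\yyag}^2$ with $\beta\eqd\min_{t\in\T}\beta_t$. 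For $S_2$, each $-u_i$ is a finite convex function on the convex set $\X_i$ by \Cref{assp_convex_costs}(3), so monotonicity of its subdifferential gives $\langle\g'_i-\bh'_i,\xx_i-\yy_i\rangle\ge 0$ for all choices of subgradients, whence $S_2\ge 0$; and if $u_i$ is $\stgccvut_i$-strongly concave, i.e. $-u_i$ is $\stgccvut_i$-strongly convex, then $\langle\g'_i-\bh'_i,\xx_i-\yy_i\rangle\ge\stgccvut_i\norm{\xx_i-\yy_i}^2$, so $S_2\ge\stgccvut\norm{\xx-\yy}^2$ with $\stgccvut\eqd\min_{i\in\I}\stgccvut_i$ and $\norm{\xx-\yy}^2=\sum_{i\in\I}\norm{\xx_i-\yy_i}^2$.

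Putting the pieces together gives the three statements: $S_1,S_2\ge 0$ yields monotonicity of $H'$; $S_1\ge 0$ and $S_2\ge\stgccvut\norm{\xx-\yy}^2$ yield $\stgccvut$-strong monotonicity; and $S_1\ge\beta\norm{\xxag-\yyag}^2$ with $S_2\ge 0$ yield $\beta$-aggregative strong monotonicity. The only point needing a short argument rather than a bare citation is the strengthened subgradient inequality under strong convexity, obtained by writing the strong-convexity inequality for $-u_i$ twice (at $\xx_i$ tested against $\yy_i$, and at $\yy_i$ tested against $\xx_i$) and adding; I expect no genuine obstacle here, the whole proof being a direct computation from \Cref{prop:subgradients-sets} and the standard monotonicity of subdifferentials.
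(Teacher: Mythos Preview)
Your proposal is correct and follows essentially the same approach as the paper: decompose each $\g_i-\bh_i$ via \Cref{prop:subgradients-sets} into the congestion-cost part and the individual-utility subgradient part, show the former yields $S_1=\langle\cc(\xxag)-\cc(\yyag),\xxag-\yyag\rangle\ge 0$ (with the $\beta$-lower bound under strict increase), and the latter yields $S_2\ge 0$ (with the $\stgccvut$-lower bound under strong concavity). The paper's proof is identical in structure and content.
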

\proof See \Cref{app:proof-monotonicity}.
\medskip

As opposed to the monotonicity of $H'$ shown in \Cref{prop:monotonemap}, $H$ is rarely monotone
(except in some particular cases, e.g. with $\cc$ linear \cite{orda1993competitive,richman2007topounique}): even in the case where $\cc$ is piece-wise linear, the \Cref{ex:nonlinearH} below shows that $H$ can be non monotone.
\begin{example} \label{ex:nonlinearH}
Let $I=2$ and $T=1$, $\X_1=\X_2=[0,4]$. Consider the cost function $c(X)=X$ for $X \leq 4$ and $c(X)=3X-8$ for $X\geq 4$. \Cref{assp_convex_costs} holds. Consider the profiles $\x_1=3, \x_2=1$ and $\y_1=4,\y_2=0$, then $\g\eqd (c(4)+3\x_1,c(4)+3\x_2) \in H(\xx)$ and $ \bm{h} \eqd (c(4)+1 \y_1,c(4)+1 \y_2)\in H(\yy)$, but:
\begin{equation*}
\txt\sum_{i\in\{1,2\} } \langle \g_{i}-\bm{}h_{i} , \x_i-\y_i \rangle = -2 <0 \ .
\end{equation*}
\end{example}

In view of \Cref{th:unique_vwe}, the absence of monotonicity of $H$ can result in multiple VNEs \cite{bhaskar2009notunique}.

In \cite{PaulinTSG17}, a particular case with parallel arc network is shown to have a unique NE. However, in the next section, we shall prove that, when the number of players is very large, all VNEs are close to each other and they can be well approximated by the unique SVWE.

\section{Approximating VNEs of a large game}
\label{sec:approxRes}
\subsection{Considering SVWE instead of VNE}

The approximation of VNEs is done in two steps.  The first step consists in replacing VNE by SVWE. According to the second interpretation of SVWE, the SVWE should be close to the VNEs in a large game. Now, let us formulate this idea and bound the distance between the two.
\smallskip

Denote by $\X_0\subset \rit^T$ the convex closed hull of $\bigcup_{i\in \I}\X_i$, by $m=\max_{\xx \in \X_0}\|\xx\|$ 
and $M=I m\geq \max_{\xxag\in \Sxag}\|\xxag\|$. 
Define compact set $\M=[0,M+\delta I]^T$, where $\delta>0$ is a constant to be specified later.

Denote by $C\eqd \sup\{b\in \rit: b\in \partial c_t(\xag_t), \xxag\in \Sxag, t\in \T\}$ the upper bound on the subgradients of $\cc$. 

\smallskip

\Cref{th:unique_vwe} and \Cref{prop:monotonemap} show that, in general, VNEs are not unique. However, when the set of players is large, 
 VNEs are indeed close to each other:
\begin{theorem}[VNEs are close to each other]\label{th:VNEareclose}
Under \Cref{assp_convex_costs}, let $\xx$ and $\yy$ in $\FX(A)$ be two distinctive VNEs of $\G(A)$. Then \\
(1) if for each $i\in \I$, $u_i$ is $\stgccvut_i$-strongly concave, then: 
\begin{equation}
  \|\xx-\yy\| \leq 2M \sqrt{\txt\frac{TC}{\stgccvut I}} \ ,
  \end{equation} with $\stgccvut \eqd \min_{i\in \I} \stgccvut_i$;\\
(2) if for each $t\in \T$, $c_t$ is $\beta_t$-strictly increasing, then: 
\begin{equation}
  \|\xxag-\yyag\|\leq  2M \sqrt{\txt\frac{TC}{\beta I}} \ ,
  \end{equation} with $\beta\eqd \min_{t\in \T} \beta_t$.
\end{theorem}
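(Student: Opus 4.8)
The plan is to exploit the near-monotonicity of the correspondence $H$: although $H$ itself is not monotone, it is a perturbation of the monotone correspondence $H'$, and the perturbation term is exactly the $(\x_{i,t}a_{i,t})_t$ appearing in \Cref{prop:subgradients-sets}. Since each $|\x_{i,t}| \le m$ and each $a_{i,t} \le C$, this perturbation is of order $m$ per coordinate, i.e.\ of order $M/I$ relative to the size $M$ of the aggregate. Plugging two VNEs into each other's variational inequality and adding will cancel the $\cc(\xxag)$ and $H'$-monotone parts and leave only this small perturbation, which will be controlled against the strong-convexity (resp. strict-increasingness) lower bound.

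Concretely, first I would let $\xx,\yy \in \FX(A)$ be two VNEs, so there exist $\g \in H(\xx)$ and $\bh \in H(\yy)$ with $\langle \g, \yy-\xx\rangle \ge 0$ and $\langle \bh, \xx-\yy\rangle \ge 0$; adding gives $\sum_i \langle \g_i - \bh_i, \xx_i - \yy_i\rangle \le 0$. Next, using \Cref{prop:subgradients-sets}, write $\g_i = \cc(\xxag) + (\x_{i,t}a_{i,t})_t + \g'_i$ and $\bh_i = \cc(\yyag) + (\y_{i,t}b_{i,t})_t + \bh'_i$ with $\g'_i \in \partial(-u_i)(\xx_i)$, $\bh'_i \in \partial(-u_i)(\yy_i)$, $a_{i,t},b_{i,t} \in \partial c_t$. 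Then $(\cc(\xxag) + \g'_i)_i \in H'(\xx)$ and $(\cc(\yyag) + \bh'_i)_i \in H'(\yy)$, so by \Cref{prop:monotonemap}(1) (monotonicity of $H'$) their contribution to $\sum_i \langle \g_i - \bh_i, \xx_i - \yy_i\rangle$ is nonnegative. Under hypothesis (1), $-u_i$ is $\stgccvut_i$-strongly convex, so the $\g'_i$ part alone contributes at least $\sum_i \stgccvut_i\|\xx_i-\yy_i\|^2 \ge \stgccvut\|\xx-\yy\|^2$; under hypothesis (2), $c_t$ being $\beta_t$-strictly increasing makes $\cc$ $\beta$-aggregatively strongly monotone, contributing at least $\beta\|\xxag-\yyag\|^2$. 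In either case we get
\begin{equation*}
\stgccvut\|\xx-\yy\|^2 \ \le\ -\txt\sum_{i\in\I}\langle (\x_{i,t}a_{i,t})_t - (\y_{i,t}b_{i,t})_t,\ \xx_i-\yy_i\rangle
\end{equation*}
(and analogously with $\beta\|\xxag-\yyag\|^2$ on the left).

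The remaining step is to bound the right-hand side. By Cauchy--Schwarz it is at most $\big(\sum_i \|(\x_{i,t}a_{i,t})_t - (\y_{i,t}b_{i,t})_t\|^2\big)^{1/2}\|\xx-\yy\|$. Since $0 \le a_{i,t},b_{i,t}\le C$ and $0\le \x_{i,t},\y_{i,t}$ with $\|\xx_i\|,\|\yy_i\|\le m$, each vector $(\x_{i,t}a_{i,t})_t$ has norm at most $Cm$, so the difference has norm at most $2Cm\sqrt{T}$ coordinatewise... more carefully, $\|(\x_{i,t}a_{i,t})_t\|\le C\|\xx_i\|\le Cm$, giving $\|(\x_{i,t}a_{i,t})_t - (\y_{i,t}b_{i,t})_t\|\le 2Cm$, hence the sum over $i$ is at most $4C^2m^2 I$... but one needs the stated bound with $TC$ rather than $C^2$, so the correct estimate is instead $\|(\x_{i,t}a_{i,t})_t\|^2 = \sum_t \x_{i,t}^2 a_{i,t}^2 \le C^2\|\xx_i\|^2$ — combined with $\sum_i\|\xx_i\|^2 \le (\sum_i\|\xx_i\|)^2 \le M^2$ and treating the two profiles and the factor $T$ carefully, this yields $\|\xx-\yy\| \le 2M\sqrt{TC/(\stgccvut I)}$. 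Dividing through by $\|\xx-\yy\|$ finishes (1); for (2) one instead bounds $\sum_i\langle(\x_{i,t}a_{i,t})_t - (\y_{i,t}b_{i,t})_t, \xx_i-\yy_i\rangle$ by a term involving $\|\xxag-\yyag\|$ — this requires rewriting the perturbation sum in terms of aggregates, which is the delicate point.

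The main obstacle I expect is precisely getting the dimensional constants to match the stated bound: the naive Cauchy--Schwarz estimate produces $C^2m^2 I$-type quantities, and one has to be careful about (i) whether to bound $\sum_i\|\xx_i\|^2$ by $m^2 I$ or by $M^2/I$ via convexity of $x\mapsto x^2$ and $\sum\|\xx_i\|\le M$, and (ii) for part (2), how to pass from the per-player inner products $\langle(\x_{i,t}a_{i,t})_t-(\y_{i,t}b_{i,t})_t,\xx_i-\yy_i\rangle$ to something controlled by $\|\xxag-\yyag\|$ alone rather than by $\|\xx-\yy\|$, since the left-hand side after using $\beta$-aggregative strong monotonicity only gives $\|\xxag-\yyag\|$. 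This likely uses the fact that $a_{i,t}=b_{i,t}$ whenever $\xxag=\yyag$ is \emph{not} available, so instead one bounds $|\x_{i,t}a_{i,t}-\y_{i,t}b_{i,t}|$ by splitting as $|\x_{i,t}||a_{i,t}-b_{i,t}| + |a_{i,t}||\x_{i,t}-\y_{i,t}|$ and uses that $a_{i,t}-b_{i,t}$ is itself controlled — or, more simply, one keeps $\|\xx-\yy\|$ on the right and then reuses part (1)'s bound on $\|\xx-\yy\|$ to close part (2). I would check whether the paper's intended argument is this two-stage one, or a direct aggregative estimate.
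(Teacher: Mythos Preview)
Your overall strategy is exactly the paper's: plug each VNE into the other's variational inequality, add, decompose $\g_i$ and $\bh_i$ via \Cref{prop:subgradients-sets}, and isolate the monotone part $\langle \cc(\xxag)-\cc(\yyag),\xxag-\yyag\rangle + \sum_i\langle \g'_i-\bh'_i,\xx_i-\yy_i\rangle$ on the left against the perturbation $-\sum_{i,t}(a_{i,t}x_{i,t}-b_{i,t}y_{i,t})(x_{i,t}-y_{i,t})$ on the right. The gap is in how you bound that perturbation.

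The paper does \emph{not} use Cauchy--Schwarz. It bounds each scalar term directly: since $0\le x_{i,t},y_{i,t}\le m=M/I$ and $0\le a_{i,t},b_{i,t}\le C$, one has $|a_{i,t}x_{i,t}-b_{i,t}y_{i,t}|\le 2CM/I$ and $|x_{i,t}-y_{i,t}|\le 2M/I$, so each term is at most $4CM^2/I^2$; summing over the $IT$ pairs $(i,t)$ gives the absolute bound $4TCM^2/I$, with no factor of $\|\xx-\yy\|$ or $\|\xxag-\yyag\|$ on the right. Then $\stgccvut\|\xx-\yy\|^2\le 4TCM^2/I$ and $\beta\|\xxag-\yyag\|^2\le 4TCM^2/I$ yield the two stated bounds immediately and symmetrically.

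Your Cauchy--Schwarz route leaves an $\|\xx-\yy\|$ on the right-hand side. In part~(1) this merely produces a different constant ($2CM/(\stgccvut\sqrt{I})$ instead of $2M\sqrt{TC/(\stgccvut I)}$). In part~(2) it is a genuine obstruction: you end up with $\beta\|\xxag-\yyag\|^2\le (\text{const})\,\|\xx-\yy\|$, and there is no way to control $\|\xx-\yy\|$ because part~(2) does \emph{not} assume strong concavity of the $u_i$. Your proposed ``two-stage'' fix (reuse part~(1)'s bound) therefore fails: the hypotheses of~(1) and~(2) are independent. The direct termwise bound is both simpler and what makes part~(2) go through.
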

\proof: See \Cref{app:proof-VNEclose}.
\smallskip

The first step of approximation is based upon the following theorem which gives an upper bound on the distance between a VNE and the unique SVWE.
\begin{theorem}[SVWE is close to VNE]\label{th:pseudoVNE}
Under \Cref{assp_convex_costs}, let $\xx\in \FX(A)$ be a VNE of  $\GA$ and $\sxx\in \FX(A)$ a SVWE of $\GA'$, then: \\
(1) if for each $i\in \I$, $u_i$ is a  $\stgccvut_i$-strongly concave, then $\xx^*$ is unique and: \begin{equation}
  \|\xx-\sxx\| \leq M\sqrt{\tfrac{2TC}{\stgccvut I}} \ , 
  \end{equation}
   with $\stgccvut \eqd  \min_{i}\stgccvut_i $ ;\\
(2) if for each $t\in \T$, $c_t$ is $\beta_t$-strictly increasing, then $\xxag^*$ is unique and: 
\begin{equation}
\|\xxag-\sxxag\|\leq  M \sqrt{\tfrac{2TC}{\beta I}} \ ,
\end{equation}
 with $\beta\eqd \min_{t} \beta_t$.
\end{theorem}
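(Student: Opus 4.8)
The plan is to confront the variational inequality defining the VNE $\xx$ with the one defining the SVWE $\sxx$, and to turn the resulting scalar inequality into a bound on $\norm{\xx-\sxx}$ (resp.\ on $\norm{\xxag-\sxxag}$) by invoking the strong (resp.\ aggregative strong) monotonicity of $H'$ from \Cref{prop:monotonemap}; the whole error will come from the ``atomic correction'' that separates $H$ from $H'$.

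First I would decompose the subgradient witnessing the VNE. By \Cref{prop:subgradients-sets}, the element $\g\in H(\xx)$ realizing \eqref{cond:ind_opt_ve} has the form $\g_i=\cc(\xxag)+(\x\iti a\iti)_{t\in\T}+\g'_i$ with $0\le a\iti\le C$ and $\g'_i\in\partial(-u_i)(\xx_i)$; setting $\bm{d}_i\eqd(\x\iti a\iti)_{t\in\T}$ and $\bar{\g}_i\eqd\cc(\xxag)+\g'_i$, the same lemma gives $\bar{\g}\in H'(\xx)$ and $\g=\bar{\g}+\bm{d}$. Since $\xx,\sxx\in\FX(A)$, I would then plug $\sxx$ into \eqref{cond:ind_opt_ve} and $\xx$ into \eqref{eq:def-pseudo}, obtaining $\langle\g,\sxx-\xx\rangle\ge 0$ and $\langle\g^*,\xx-\sxx\rangle\ge 0$ for the witness $\g^*\in H'(\sxx)$; adding these and using $\g=\bar{\g}+\bm{d}$ gives
\begin{equation*}
\langle\bar{\g}-\g^*,\ \xx-\sxx\rangle\ \le\ \langle\bm{d},\ \sxx-\xx\rangle .
\end{equation*}
In the strongly concave case the left-hand side is at least $\stgccvut\norm{\xx-\sxx}^2$ by \Cref{prop:monotonemap}(2) (with $\stgccvut=\min_i\stgccvut_i$), while in the strictly-increasing-cost case it is at least $\beta\norm{\xxag-\sxxag}^2$ by \Cref{prop:monotonemap}(3); strict, resp.\ aggregative strict, monotonicity of $H'$ also yields uniqueness of $\sxx$, resp.\ of $\sxxag$, via \Cref{th:unique_vwe}.

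It then remains to bound the right-hand side coordinate-wise: $|\langle\bm{d},\sxx-\xx\rangle|\le\sum_{i\in\I}\sum_{t\in\T}|\x\iti|\,a\iti\,|(\sxx)_{i,t}-\x\iti|$, and since $a\iti\le C$, $|\x\iti|\le\norm{\xx_i}\le m$, and likewise $|(\sxx)_{i,t}|\le m$, each of the $IT$ summands is at most $2Cm^2$, so the total is at most $2CITm^2=2CTM^2/I$ using $M=Im$. Combining, $\stgccvut\norm{\xx-\sxx}^2\le 2CTM^2/I$ in case (1) and $\beta\norm{\xxag-\sxxag}^2\le 2CTM^2/I$ in case (2); taking square roots produces exactly the two claimed bounds.

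The one genuinely delicate point is the decomposition step: reading off from \Cref{prop:subgradients-sets} that the discrepancy between an element of $H(\xx)$ and an element of $H'(\xx)$ at the \emph{same} profile is precisely $\bm{d}=(\x\iti a\iti)_{i,t}$, and then keeping the signs straight when the two variational inequalities are added (one must test each equilibrium's inequality at the other equilibrium). The monotonicity estimate and the coordinate-wise size bound are routine, the only subtlety there being that $|\x\iti|\le\norm{\xx_i}$ is what makes the per-coordinate bound $|\x\iti|\le m$ legitimate.
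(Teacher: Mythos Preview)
Your proposal is correct and is essentially the argument the paper intends: the paper's own proof is just ``Similar to the proof of \Cref{th:VNEareclose}'', and indeed your decomposition $\g=\bar\g+\bm d$ with $\bar\g\in H'(\xx)$, the summing of the two variational inequalities, and the coordinate-wise bound $\sum_{i,t}C\cdot m\cdot 2m=2CTM^2/I$ reproduce that proof with only one atomic correction term instead of two (which is why the constant is $2$ here versus $4$ in \Cref{th:VNEareclose}). The only cosmetic difference is that you invoke the strong monotonicity of $H'$ via \Cref{prop:monotonemap}, whereas the paper's proof of \Cref{th:VNEareclose} writes out the $\cc$ and $\g'_i$ contributions separately; these are the same computation.
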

\begin{proof} Similar to the proof of \Cref{th:VNEareclose}. \end{proof}

An upper bound on the distance between two VNEs can also be derived from \Cref{th:pseudoVNE}, applying the triangle inequality. However, \Cref{th:VNEareclose} gives a tighter upper bound. 

\Cref{th:pseudoVNE} shows that, if the number of players $I$ is large, then the SVWE will provide a good approximation of a VNE of $\GA$. Similar results are obtained in \cite{gentile2017nash}. However, this does not reduce the dimension of the GVI to resolve: the GVI characterizing the VNE and those characterizing the SVWE have the same dimension. 
For this reason, the second step of approximation consists in  regrouping similar populations.

\subsection{Classification of populations}
\label{subsec:class}

In this subsection, we shall 
 regroup the populations in $\GA'$ with similar strategy sets $\X_i$ and utility subgradients $\partial(-u_i)$ (w.r.t. the Hausdorff distance, denoted by $d_H$) into  larger populations, endow them with a common strategy set and a common utility function, so that the SVWE of this new population game approximates the SVWE of $\GA$.

At the SVWE of the new population game with a reduced dimension, all the nonatomic players in the same population play the same action, by the definition of SVWE. Therefore, in order for this new SVWE to well approximate the SVWE in $\GA'$, we must ensure that populations with similar characteristics in $\GA'$ do play similar actions at the SVWE of $\GA'$. 
\Cref{prop:continuiteNE} formulates this results in the case without coupling constraint. 

Without loss of generality, we assume that for each $i\in \I$, $u_i$ can be extended to a neighborhood of $\frac{\M}{I}=[0,m]^T$, and is bounded on $\frac{\M}{I}$. Denote $\Bcuti=\sup\{\|\g'_i\|: \g'_i \in \partial (-u_i)(\xx_i), \xx_i\in \X_i\}$, and $\Bc=\sup\{\|\cc(\xxag)\|: \xxag\in \Sxag\}$.

\begin{proposition}\label{prop:continuiteNE}
Under \Cref{assp_convex_costs}, let $\sxx\in \FX$ be a SVWE of $\G'$ (without coupling constraints).  For two populations $i$ and $j$ in $\I$, if  $u_i$ is $\stgccvut_i$-strongly concave, $d_H(\X_\i, \X_{j}) \leq \dset$,  and $\sup_{\xx_j \in \X_j } \sup_{\g'_j\in \partial (-u_j)(\xx_j)} d(\g'_j, \partial (-u_i)(\xx_j))\leq \duti$, then 
\begin{equation*}
\norm{\sxx_\i - \sxx_j}^2\leq \tfrac{1}{\stgccvut_i}\big((\Bcuti+\Bcutj+2\Bc)\duti+2\dset m\big).
\end{equation*}
\end{proposition}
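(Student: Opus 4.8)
The plan is to exploit the two variational characterizations available to us: the fact that $\sxx$ is an SVWE of $\G'$, which by \Cref{def:pseudoVNE} gives a variational inequality involving $H'$, together with the strong concavity of $u_i$. The key device is to test the SVWE inequality with a well-chosen competitor action for populations $i$ and $j$: since $d_H(\X_i,\X_j)\le\dset$, there exists a point $\bv_j\in\X_i$ with $\norm{\sxx_j-\bv_j}\le\dset$ and, symmetrically, a point $\bv_i\in\X_j$ with $\norm{\sxx_i-\bv_i}\le\dset$. Plugging the profile that swaps populations $i$ and $j$ onto these nearby feasible points (and leaves every other population at its SVWE action) into \eqref{eq:def-pseudo} yields
\begin{equation*}
\langle \g_i,\bv_j-\sxx_i\rangle + \langle\g_j,\bv_i-\sxx_j\rangle\ge 0,
\end{equation*}
where $\g_k=\cc(\sxxag)+\g'_k$ with $\g'_k\in\partial(-u_k)(\sxx_k)$, using the explicit form of $H'$ from \Cref{prop:subgradients-sets}.

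Next I would rewrite this by inserting and subtracting $\sxx_j$ in the first bracket and $\sxx_i$ in the second, so that it becomes
\begin{equation*}
\langle\g_i-\g_j,\sxx_j-\sxx_i\rangle \ \le\ \langle\g_i,\bv_j-\sxx_j\rangle + \langle\g_j,\bv_i-\sxx_i\rangle.
\end{equation*}
The left-hand side, after cancelling the common $\cc(\sxxag)$ term, equals $\langle\g'_i-\g'_j,\sxx_j-\sxx_i\rangle = \langle(-\g'_i)-(-\g'_j),\sxx_i-\sxx_j\rangle$ up to sign bookkeeping; here I would invoke $\stgccvut_i$-strong concavity of $u_i$, i.e. $\stgccvut_i$-strong monotonicity of $\partial(-u_i)$, to lower-bound $\langle\g'_i-\tilde\g_i,\sxx_i-\sxx_j\rangle\ge\stgccvut_i\norm{\sxx_i-\sxx_j}^2$ for any $\tilde\g_i\in\partial(-u_i)(\sxx_j)$. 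The subtlety is that $\g'_j$ is a subgradient of $-u_j$, not of $-u_i$; this is exactly where the hypothesis $\sup_{\xx_j\in\X_j}\sup_{\g'_j\in\partial(-u_j)(\xx_j)}d(\g'_j,\partial(-u_i)(\xx_j))\le\duti$ enters: choose $\tilde\g_i\in\partial(-u_i)(\sxx_j)$ with $\norm{\g'_j-\tilde\g_i}\le\duti$, so that $\langle\g'_i-\g'_j,\sxx_j-\sxx_i\rangle\ge\stgccvut_i\norm{\sxx_i-\sxx_j}^2-\duti\norm{\sxx_i-\sxx_j}\cdot$(something bounded) after splitting off the $\tilde\g_i-\g'_j$ term with Cauchy–Schwarz.

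Finally I would bound the right-hand side from above: $\norm{\g_k}\le\Bc+\Bcut{}_k$ and $\norm{\bv_j-\sxx_j}\le\dset$, $\norm{\bv_i-\sxx_i}\le\dset$, so the RHS is at most $(\Bcuti+\Bcutj+2\Bc)\dset$ — wait, more carefully it should come out as $2\Bc\dset$ from the cost terms plus $(\Bcuti+\Bcutj)$-type terms; I would track constants so that combining the lower bound $\stgccvut_i\norm{\sxx_i-\sxx_j}^2 - (\text{linear term in }\norm{\sxx_i-\sxx_j})$ on the left with the $(\Bcuti+\Bcutj+2\Bc)\duti + 2\dset m$-type upper bound on the right, and absorbing the cross term, yields exactly $\norm{\sxx_i-\sxx_j}^2\le\frac1{\stgccvut_i}\big((\Bcuti+\Bcutj+2\Bc)\duti+2\dset m\big)$. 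The main obstacle is the careful accounting of which terms are controlled by $\duti$ versus $\dset$ and absorbing the mixed $\duti\cdot\norm{\sxx_i-\sxx_j}$ cross term — one wants to avoid a spurious $\norm{\sxx_i-\sxx_j}$ on the right, which suggests that the $\duti$-term should be paired directly against a bounded quantity ($\norm{\sxx_i-\sxx_j}\le m$ or a diameter-type bound on $\X_0$) rather than against the Cauchy–Schwarz factor, and that is the step where I expect to have to be most careful with the constants to match the stated bound.
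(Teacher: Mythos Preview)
Your approach is essentially the paper's: test the SVWE inequality with the profile that moves $\sxx_i$ toward $\Pi_{\X_j}(\sxx_i)$ and $\sxx_j$ toward $\Pi_{\X_i}(\sxx_j)$, cancel the common $\cc(\sxxag)$, replace $\g'_j\in\partial(-u_j)(\sxx_j)$ by a $\duti$-close $\tilde\g_i\in\partial(-u_i)(\sxx_j)$, and invoke $\stgccvut_i$-strong monotonicity of $\partial(-u_i)$. The paper merely organizes it from the other end---starting from the strong-monotonicity inequality and then using the two \emph{individual} optimality conditions (valid precisely because there is no coupling)---but the chain of inequalities is the same.

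Two remarks. First, your displayed inequality has a sign slip: what one actually gets from the SVWE test is $\langle\g_i-\g_j,\sxx_i-\sxx_j\rangle\le\langle\g_i,\bv_j-\sxx_j\rangle+\langle\g_j,\bv_i-\sxx_i\rangle$, and this is the direction you need for the lower bound via strong monotonicity. Second, your first instinct on the constants was right and you should trust it: the right-hand side is at most $(\Bcuti+\Bcutj+2\Bc)\dset$, and the $\duti$-correction contributes $2\duti m$ via the crude bound $\|\sxx_i-\sxx_j\|\le 2m$, giving
\[
\stgccvut_i\|\sxx_i-\sxx_j\|^2\le(\Bcuti+\Bcutj+2\Bc)\,\dset+2\duti m.
\]
This is exactly what the paper's proof derives; the roles of $\dset$ and $\duti$ are interchanged in the statement as printed, which is the source of your hesitation about ``which terms are controlled by $\duti$ versus $\dset$''. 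There is no mixed cross term to absorb.
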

\proof: See \Cref{app:proof:lmcontinuite}.
%
%

In the case with coupling constraints, the proof for a similar result is more complicated. Let us leave it to \Cref{cor:ij}. 

Let us now present the regrouping procedure.
\smallskip

Denote an auxiliary game $\tG\esnu(A)$, with a set $\N$ of $N$ populations. Each population $n\in \N$ corresponds to a subset of populations in game $\GA'$, denoted by $\I_n$, and $\dot\bigcup_{n\in \N}\I_n = \I$. 

Denote $I_n = |\I_n|$ the number of original populations now included in $n$. By abuse of notations, let $n$ also denote the interval $[0, I_n]$, so that each nonatomic player in population $n$ is represented by a point $\theta\in [0,I_n]$. 
 The common action set of each nonatomic player in $n$ is a compact convex subset of $\rit^T$, denoted by $\X_n$. 
 
Each player $\th$ in each population $n$ having chosen action $\xx_\th$, let $\xxag \eqd \sum_{n\in \N}\int_{\th\in n} \xx_\th \dth$ denote the aggregate action profile. The aggregate action-profile set in $\rit^T$ is then:\\
$\Sxag\esnu=\{\sum_{n\in \N}\int_{\th\in n} \xx_\th \dth: \xx_\th\in \X_n, \, \forall \th\in n, \, \forall n\in \N\}$. 
The cost function of player $\th$ in population $n\in \N$ is:
\begin{equation*}
f_n(\xx_\th, \xxag)=\langle \xx_\th, \cc(\xxag) \rangle - u_n(\xx_\th), 
\end{equation*}
where the common individual utility function $u_n$ for all the players in $n$ is concave on a neighborhood of $\X_n$.

We are only interested in \emph{symmetric} action profiles, i.e. where all the nonatomic players in the same population $n$ play the same action.  Denote the set of symmetric action profiles by $\FX\esnu=\prod_{n\in \N} \X_n$. Let us point out that a symmetric action profile happens as a specific case in the non-cooperative game, without any coordination between the players within a population. Besides, considering the coupling constraint that $\xxag\in A$, we define $\FX\esnu(A)=\{\xx\in\FX\esnu:\xxag=\sum_{n\in \N}I_n\xx_n\in A \}$.
\smallskip

%
%
%
%

Let us introduce two indicators to ``measure'' the quality of the clustering of $\tG\esnu$:
\begin{itemize}
\item $\mdset = \max_{n\in \N} \dset_n$, where
\begin{equation} \label{eq:def_dset}
\dset_n \eqd \txt\max_{\i \in \I\esnu_n} d_{H}\left( \X_\i, \X_n \right)\ ,
\end{equation}
\item $\mduti = \max_{n\in \N} \duti_n$, where
\begin{equation}\label{eq:def_dut}
\duti_n \eqd  \max_{\i \in \I_n}\sup_{\xx\in \X_n}   d_H\left(  \partial (-u_n)(\txt {\xx }) , \partial (-u_i)(\xx) \right).
\end{equation}
\end{itemize}
The quantity $\dset_n$ measures the heterogeneity in strategy sets of populations within the group $\I_n$, while  $\duti_n$ measures the heterogeneity in the subgradients in the group $\I_n$.

Since the auxiliary game $\tG\esnu$ is to be used to compute an approximation of an equilibrium of the large game $\G$, the indicators $\dset_n$ and $\duti_n$ should be minimized when defining $\tG\esnu$. 
Thus, we assume that $(\X_n)_n$ and $(u_n)_n$ are chosen such that the following holds:
\begin{assumption} \label{assp:Xnandun} For each $n \in\N$, we have:
\begin{enumerate}[wide]
\item $\X_n$ is in the convex hull of $\bigcup_{i\in \I_n}\X_i$, so that $ \norm{\X_n} \leq m$. 
Moreover, for each $i\in\I_n$, $\aff \X_i \subset \aff \X_n$, where $\aff S$ denotes the affine hull of set $S$;
\item similarly,  $u_n$ is  such that $\partial u_n(\xx)$ is contained in the convex hull of $\bigcup_{i\in \I_n}\partial (-u_i)(\xx)$ for all $\xx\in \X_n$, so that $\norm{\partial (-u_n)}_\infty \leq \max_{i\in \I_n}\Bcuti$.
\end{enumerate}
\end{assumption}

An interesting case in the perspective of minimizing the quantities $\mdset$ and $\mduti$ is  when $\I$ can be divided into homogeneous populations, as in \Cref{ex:homogenousPops} below.

\begin{example} \label{ex:homogenousPops}
The player set $\I$ can be divided into a small number $N$  of subsets $(\I_n)_n$, with homogeneous players inside each subset $\I_n$ (i.e., for each $n$ and $i,j \in \I_n$, $\X_i=\X_j $ and $u_i=u_j$). In that case, consider an auxiliary game $\tG\esnu$ with $N$ populations and, for each $n\in\N$ and $i\in\I_n$, $\X_n^N\eqd \X_i$ and $u_n \eqd u_i$. Then, $\mdset=\mduti=0$.
\end{example}


In order to approximate the SVWE of $\GA'$ by the SVWE of an auxiliary game $\tG\esnu$, let us first state the following result on  the geometry of the action sets for technical use.
\begin{lemma}\label{lm:intprofile}
Under \Cref{assp_convex_costs}, there exists a strictly positive constant $\rho$  and an action profile $\zz\in \FX(A)$ such that, $d(\zz_\i, \rbd \X_\i)\geq \rho$ for all $\i\in \I$, where $\rbd$ stands for the relative boundary.
\end{lemma}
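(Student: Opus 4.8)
The plan is to build $\zz$ as a convex combination of a feasible profile lying in $A$ (which exists by \Cref{assp_convex_costs}(4)) and profiles whose individual components sit at the relative-interior centers of the $\X_i$'s. First I would fix, for each $i\in\I$, a point $\bv_i\in\rlt\X_i$; such a point exists because \Cref{assp_convex_costs}(1) guarantees $\X_i$ is convex, compact, with nonempty relative interior. Since $\X_i$ is compact and $\bv_i$ is in its relative interior, there is $\rho_i>0$ with $d(\bv_i,\rbd\X_i)\geq 2\rho_i$ (the ball of radius $2\rho_i$ around $\bv_i$ within $\aff\X_i$ stays in $\X_i$). Next, pick any $\xx^0\in\FX(A)$, which is nonempty by \Cref{assp_convex_costs}(4) combined with the definition of $\FX(A)$ and $\Sxag\cap A\neq\emptyset$ (and the fact that each $\X_i$ is nonempty so the aggregation map is onto a set whose image meets $A$).

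The key step is a perturbation argument. Consider $\zz^\epsilon_i\eqd(1-\epsilon)\xx^0_i+\epsilon\bv_i$ for $\epsilon\in(0,1)$. Each $\zz^\epsilon_i$ lies in $\X_i$ by convexity, and because $\bv_i$ is relatively interior, for any fixed $\epsilon>0$ one has $d(\zz^\epsilon_i,\rbd\X_i)\geq \epsilon\rho_i$ — this is the standard fact that mixing any point of a convex set with an interior point lands strictly inside, with a quantitative margin proportional to $\epsilon$ times the margin of $\bv_i$. Setting $\rho\eqd \epsilon\min_{i\in\I}\rho_i>0$ then gives the desired uniform lower bound on the distance to the relative boundary for every $i$. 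It remains to check feasibility: $\zzag^\epsilon=\sum_i\zz^\epsilon_i=(1-\epsilon)\xxag^0+\epsilon\sum_i\bv_i$. Here $\xxag^0\in\Sxag\cap A$ and $\sum_i\bv_i\in\Sxag$ by definition of $\Sxag$; but we need $\zzag^\epsilon\in A$.

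The main obstacle is precisely this last point — ensuring the perturbed aggregate still satisfies the coupling constraint $A$ — since $\sum_i\bv_i$ need not lie in $A$. There are two ways around it. The cleanest is to choose the $\bv_i$ more carefully: since $\Sxag\cap A$ is a nonempty convex set and $\Sxag=\{\xxag:\forall i,\exists\xx_i\in\X_i,\sum_i\xx_i=\xxag\}$, by an argument on the relative interiors one can select $\bv_i\in\rlt\X_i$ with $\sum_i\bv_i\in\rlt(\Sxag)\cap A$ provided $\Sxag\cap A$ meets $\rlt\Sxag$; if $\Sxag\cap A\subset\rbd\Sxag$ one instead keeps $\xx^0$ as the anchor in $A$ and only needs $\zzag^\epsilon\in A$, which holds automatically when $A$ is chosen so that $\Sxag\cap A$ has nonempty interior relative to $\aff\Sxag$ — an implicit regularity one may need to assume or which follows from \Cref{assp_convex_costs}(4) together with \Cref{assp:Xnandun}(1) on affine hulls. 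In the write-up I would state it as: take $\zz^0\in\FX(A)$, take $\bv_i$ relatively interior, and note $\zz^\epsilon\in\FX(A)$ for small $\epsilon$ because the aggregate moves continuously inside the convex set $A$ along the segment; then set $\rho=\epsilon\min_i\rho_i$. The honest technical content is the relative-interior bookkeeping, which is routine convex analysis (e.g. Rockafellar, Cor.~6.4.1) but must be invoked explicitly.
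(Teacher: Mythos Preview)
Your approach is the same as the paper's: form a convex combination $\zz_i=(1-\epsilon)\xx^0_i+\epsilon\bv_i$ of an anchor $\xx^0\in\FX(A)$ and relative-interior points $\bv_i\in\rlt\X_i$, then set $\rho=\epsilon\min_i d(\bv_i,\rbd\X_i)$. The gap you flag --- membership of $\zzag^\epsilon$ in $A$ --- is the only real issue, but your proposed resolution does not close it. The sentence ``$\zz^\epsilon\in\FX(A)$ for small $\epsilon$ because the aggregate moves continuously inside the convex set $A$ along the segment'' is false as stated: if $\xxag^0\in\rbd A$ and $\sum_i\bv_i\notin A$, the segment exits $A$ for every $\epsilon>0$, however small. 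Continuity is irrelevant; what is needed is that the anchor aggregate sit strictly inside $A$. Your alternative regularity condition (``$\Sxag\cap A$ has nonempty interior relative to $\aff\Sxag$'') is not the pertinent one either, and in any case merely knowing such a point exists does not help unless you \emph{choose} $\xx^0$ there.

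The paper closes the gap by selecting the anchor $\yy\in\FX(A)$ so that $\yyag$ maximizes $d(\,\cdot\,,\rbd A)$ over $\Sxag\cap A$, and then takes the mixing parameter $t=d(\yyag,\rbd A)/(3M)$. Since $\|\yyag-\bar{\xxag}\|\le 2M$, the perturbed aggregate satisfies $\|\zzag-\yyag\|\le\tfrac{2}{3}\,d(\yyag,\rbd A)$ and hence $\zzag\in A$. This yields the explicit constant $\rho=\eta\,d(\yyag,\rbd A)/(3M)$ with $\eta=\min_i\max_{\xx\in\X_i}d(\xx,\rbd\X_i)$, which is then used quantitatively in the proof of \Cref{th:main}. (Both the paper and your argument tacitly require $\Sxag\cap\rlt A\neq\emptyset$ so that this maximum is positive; that is the correct form of the regularity you were reaching for.) Once you replace your arbitrary $\xx^0$ by such a deep-in-$A$ anchor, your sketch goes through verbatim.
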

\begin{proof} See \Cref{app:proof:lm-intprofile}. \end{proof}
\Cref{lm:intprofile} ensures the existence of a profile $\zz$ such that $\zz_i$ has uniform distance to the relative boundary of $\X_i$ for all $i$ and that $\zz$ satisfies the coupling constraint. 
\smallskip

Recall that we are only interested in \emph{symmetric} action profiles in population games $\GA'$ and $\tG\esnu$. Given a symmetric action profile $\xx\esnu$ in the auxiliary game $\FX\esnu$ in $\tG\esnu$, we can define a corresponding symmetric action profile of $\GA'$ such that all the nonatomic players in the populations regrouped in $\I_n$ play the same action $\xx\esnu_n$. (It is allowed that  $\xx\esnu_n$ be not in $\X_i$. 
Recall that we can extend $u_i$ to a neighborhood of $\M/I$ such that $u_i$  is bounded on $\M/I$). Formally, define map $\psi: \rit^{NT} \rightarrow \rit^{IT}$:
\begin{equation*}
\forall \xx\esnu \hh\hh \in\hh \rit^{N\!T},\, \psi(\xx\esnu) \hh=\hh(\xx_i)_{i\in \I} \, \text{ where } \xx_i \hh= \hh\xx\esnu_n \ , \ \forall i \in \I_n \, .
\end{equation*}
 Conversely, for a symmetric action profile $\xx$ in $\GA'$, we define a corresponding symmetric action profile in the auxiliary game $\tG\esnu$ by the following map $\bpsi: \rit^{IT} \rightarrow \rit^{NT}$:
\begin{align*}
 \forall \xx \hh\in\hh \rit^{IT}, \,\bpsi(\xx)\hh=\hh(\xx\esnu_n)_{n\in \N}\, \text{ where } \xx\esnu_n \hh=\hh \tfrac{1}{I_n}\txt\sum_{i\in \I\esnu_n } \xx_i.
\end{align*}

\Cref{th:main} below is the main result of this subsection. It gives an upper bound on the distance between the SVWE of the population game $\GA'$, which has the same dimension as the original atomic game $\GA$, and that of an auxiliary game $\tG\esnu(A)$, which has a reduced dimension.

\begin{theorem}[SVWE of $\tG\esnu(A)$ is close to SVWE of $\G(A)'$]\label{th:main}
Under \Cref{assp_convex_costs,assp:Xnandun}, in an auxiliary game $\tG\esnu(A)$, $\mdset$ and $\mduti$ are defined by \Cref{eq:def_dset,eq:def_dut}, with $\mdset <\frac{\rho}{2}$. Let $\hxx$ be a SVWE of $\tG\esnu(A)$, and $\sxx$ a SVWE of $\G(A)$. Then: \\
(1) if $H'$ is strongly monotone with modulus $\stgccvut$, then both $\hxx$ and $\sxx$ are unique and
 \begin{equation} \label{eq:cvg_indiv}
\|\psi(\hxx)-\sxx\|^2 \leq \tfrac{1}{ \stgccvut} K\left(\mdset,\mduti\right)\, ; 
\end{equation}
(2) if $H'$ is aggregatively strongly monotone with modulus $\beta$,  then both $\hxxag = \sum_{n\in \N}\hxx_n$ and $\sxxag=\sum_{i\in \I} \sxx_i$ are unique,  and
\begin{equation} \label{eq:cvg_agg_nou}
\| \hxxag- \sxxag \|^2 \leq  \tfrac{1}{\beta}   K\left(\mdset,\mduti\right) \ ,
\end{equation}
where $K(\mdset ,\mduti)$, appearing in both inequalities, is:
\renewcommand{\hh}{\hspace{-0.5pt} }
\begin{equation} \label{eq:KboundExpression}
K(\mdset ,\mduti)\eqd 2M\big( 3 \tfrac{\Bdf  }{\rho}\mdset + \mduti\big) ,
\end{equation}
with $\Bdf \eqd \Bc+\max_{i\in \I}\Bcuti$. In particular, 
\begin{equation*}
K(\mdset \hh\hh ,\mduti)= \mathcal{O}(\mdset + \mduti) \underset{\mdset,\mduti \rightarrow 0}{ \longrightarrow} 0 \ .
\end{equation*}
\end{theorem}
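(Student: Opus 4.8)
The plan is to play the defining variational inequality of the SVWE $\sxx$ of $\G(A)'$ against that of the SVWE $\hxx$ of $\tG\esnu(A)$, pulled back to the ``large'' space via $\psi$, and to combine the two through the (aggregative) strong monotonicity of $H'$. The two sources of error are then (i) the mismatch between $\partial(-u_n)$ and $\partial(-u_i)$, controlled by $\mduti$ through \eqref{eq:def_dut}, and (ii) the fact that neither $\psi(\hxx)$ nor $\bpsi(\sxx)$ need be feasible, which forces a contraction towards the interior profile $\zz$ of \Cref{lm:intprofile} and produces terms proportional to $\mdset/\rho$. For uniqueness: when $H'$ is $\stgccvut$-strongly monotone, testing \eqref{eq:def-pseudo} at two solutions and adding gives $\stgccvut\|\sxx-\sxx'\|^2\le0$; the same applies to $\tG\esnu(A)$ since, under \Cref{assp:Xnandun}, its operator splits as ``$\cc$-part'' (monotone because $\cc$ is, via \Cref{prop:monotonemap}) plus ``$u_n$-part'', so it inherits strong (resp.\ aggregative strong) monotonicity; in the aggregative case \Cref{th:unique_vwe}(2) then gives uniqueness of $\sxxag$ and $\hxxag$.

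For the core estimate I would write the VI of $\sxx$ as: there is $\g=(\cc(\sxxag)+\g'_i)_{i}\in H'(\sxx)$ with $\langle\g,\xx-\sxx\rangle\ge0$ for all $\xx\in\FX(A)$, and the VI of $\hxx$ as: there is $\hat{\g}=(\cc(\hxxag)+\hat{\g}'_n)_n$ in the operator of $\tG\esnu(A)$ with $\sum_{n\in\N}I_n\langle\hat{\g}_n,\xx_n-\hxx_n\rangle\ge0$ for all symmetric $\xx\in\FX\esnu(A)$. Since each $\hxx_n\in\X_n\subset\M/I$ and $u_i$ has been extended (and kept strongly concave) on a neighbourhood of $\M/I$, the profile $\psi(\hxx)$ lies in the domain on which $H'$ is strongly monotone, so picking $\bm{h}_i=\cc(\hxxag)+\g'_{1,i}\in H'(\psi(\hxx))$ with $\g'_{1,i}\in\partial(-u_i)(\hxx_n)$ within $\mduti$ of $\hat{\g}'_n$ (possible by \eqref{eq:def_dut}) yields
\begin{equation*}
\stgccvut\,\|\psi(\hxx)-\sxx\|^2\ \le\ \langle\bm{h},\psi(\hxx)-\sxx\rangle-\langle\g,\psi(\hxx)-\sxx\rangle .
\end{equation*}
I would then bound the two terms. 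Using $\sum_{i\in\I_n}(\hxx_n-\sxx_i)=I_n(\hxx_n-\bpsi(\sxx)_n)$ and $\|\g'_{1,i}-\hat{\g}'_n\|\le\mduti$, the first term equals $\sum_{n}I_n\langle\hat{\g}_n,\hxx_n-\bpsi(\sxx)_n\rangle$ up to an error of size at most $2M\mduti$ (each factor $\|\hxx_n-\sxx_i\|\le2m$); the distinguished sum is $\le0$ once $\bpsi(\sxx)$ is replaced by a point of $\FX\esnu(A)$, which is legitimate because $d(\bpsi(\sxx)_n,\X_n)\le\mdset$ (an average of points within $\mdset$ of $\X_n$) and the interior profile available for $\tG\esnu(A)$ keeps a margin of order $\rho$, so the required contraction has magnitude $\mathcal{O}(\mdset/\rho)$ and, using $\|\hat{\g}_n\|\le\Bdf$, $\|\X_0\|\le m$ and $M=Im$, costs $\mathcal{O}(M\Bdf\mdset/\rho)$. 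The second term, $-\langle\g,\psi(\hxx)-\sxx\rangle$, is handled by the same device: replace $\psi(\hxx)$ by a point of $\FX(A)$ obtained by contracting towards $\zz$ (whose coordinates have full margin $\rho$ and whose aggregate lies in $A$), at cost $\mathcal{O}(M\Bdf\mdset/\rho)$, after which $\langle\g,\cdot\rangle\ge0$ by the VI of $\sxx$. Adding the $\mdset$-corrections and the $\mduti$-term, and tracking the constants, gives $\stgccvut\|\psi(\hxx)-\sxx\|^2\le 2M(3\tfrac{\Bdf}{\rho}\mdset+\mduti)=K(\mdset,\mduti)$, i.e.\ \eqref{eq:cvg_indiv}. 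Part (2) is verbatim, replacing strong monotonicity by $\beta$-aggregative strong monotonicity in the first line: the left-hand side becomes $\beta\|\hxxag-\sxxag\|^2$, since the aggregate of $\psi(\hxx)$ is $\sum_{n}I_n\hxx_n=\hxxag$, and the same right-hand side bound yields \eqref{eq:cvg_agg_nou}. The closing $\mathcal{O}$-statement is immediate from \eqref{eq:KboundExpression}.

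The step I expect to be the real work is the feasibility bookkeeping: one must exhibit, for both $\psi(\hxx)$ and $\bpsi(\sxx)$, nearby profiles lying simultaneously in every individual action set \emph{and} satisfying the coupling constraint, with displacement $\mathcal{O}(\mdset/\rho)$. This is where $\mdset<\rho/2$, \Cref{lm:intprofile}, and the affine-hull inclusions of \Cref{assp:Xnandun} are needed — the latter to guarantee that $\psi(\hxx)$ and $\bpsi(\sxx)$ sit in the affine hulls of the sets into which they are pushed, so that the $\mdset$-sized perturbation can be absorbed by a bounded contraction rather than moving the whole way to $\zz$ — and it is the precise count and cost of these corrections that pins down the constant $3$ in $K$. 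A secondary point to verify carefully is exactly the inheritance of strong (resp.\ aggregative strong) monotonicity by the operator of $\tG\esnu(A)$ under \Cref{assp:Xnandun}, on which the uniqueness of $\hxx$ (resp.\ $\hxxag$) rests.
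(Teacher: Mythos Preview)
Your proposal is correct and follows essentially the same route as the paper: the proof there also plays the VI of $\sxx$ against that of $\hxx$, chooses $\rr'_i\in\partial(-u_i)(\hxx_n)$ within $\mduti$ of $\bh'_n\in\partial(-u_n)(\hxx_n)$, and handles the feasibility defects of $\psi(\hxx)$ and $\bpsi(\sxx)$ by contraction towards the interior profile $\zz$ of \Cref{lm:intprofile}, with the three corrections (two of size $4m\mdset/\rho$ and $2mI_n\mdset/\rho$ from a dedicated lemma, plus the $2M\mduti$ term) producing the constant~$3$. The only cosmetic difference is that the paper writes the monotonicity inequality in split form, as $\langle\cc(\sxxag)-\cc(\hxxag),\sxxag-\hxxag\rangle+\sum_i\langle\g'_i-\rr'_i,\sxx_i-\hxx_n\rangle$, and then drops the nonnegative $\cc$-term (case~1) or the nonnegative $u$-term (case~2), rather than invoking strong monotonicity of the full $H'$; and for the $\bpsi(\sxx)$ side it does not build a separate interior profile for $\tG\esnu(A)$ but contracts the large profile towards $\zz$ first and then averages.
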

\begin{proof}  See \Cref{app:proof:main}.\end{proof}

We have pointed out that the approximation error depends on how the populations are clustered according to $\N$, and is related to the heterogeneity of players in $\I$ rather than their number. In particular, in the case of \Cref{ex:homogenousPops}, \Cref{th:main} states that the (aggregate) SVWE of the auxiliary game $\tG\esnu(A)$ is exactly equal to the (aggregate) SVWE of the large game $\G(A)'$.

A direct corollary of \Cref{th:main}-(1) is that two populations in $\GA'$ with similar characteristics have similar behavior at a SVWE there. This is the extension of \Cref{prop:continuiteNE} in  the presence of coupling constraints.
\begin{corollary}\label{cor:ij}
Let $\sxx\in \FX$ be a SVWE of game $\GA'$. Under \Cref{assp_convex_costs}, for two populations $i$ and $j$ in $\I$, if $d_H(\X_\i, \X_{j}) \leq \dset$, $\sup_{\xx \in \M/I } d_H (\partial (-u_j)(\xx), \partial (-u_i)(\xx))\leq \duti$,
and $u_i$ (resp. $u_j$) is $\stgccvut_i$- (resp. $\stgccvut_j$-)strongly concave, then 
\begin{align*}
\|\sxx_i-\sxx_j\|\leq & \left( \txt\frac{1}{ \sqrt{\stgccvut_i}}+ \frac{1}{ \sqrt{\stgccvut_j}}\right) \ K\left(\dset,\duti\right) ^{1/2}\ .
\end{align*}
\end{corollary}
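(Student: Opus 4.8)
The plan is to realize the pair $(i,j)$ as a tiny instance of the clustering construction of \Cref{th:main} and then read off the bound. Concretely, I would build an auxiliary game $\tG\esnu(A)$ out of $\GA'$ in which every population $\ell \neq i,j$ is left untouched (its own singleton cluster, with $\X_\ell$ and $u_\ell$ unchanged, so it contributes nothing to $\mdset$ or $\mduti$), while populations $i$ and $j$ are merged into a single cluster $n$ of mass $2$. For this merged cluster I would take $\X_n \eqd \X_i$ and $u_n \eqd u_i$ (a legitimate choice: $\X_i$ lies in the convex hull of $\X_i\cup\X_j$ and trivially $\aff\X_i\subset\aff\X_i$, and $\partial(-u_i)(\xx)$ is trivially in the convex hull of $\partial(-u_i)(\xx)\cup\partial(-u_j)(\xx)$, so \Cref{assp:Xnandun} holds). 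With this choice $\dset_n = d_H(\X_j,\X_i)\leq\dset$ and $\duti_n=\sup_{\xx\in\X_n} d_H(\partial(-u_i)(\xx),\partial(-u_j)(\xx))\le\duti$ by the hypotheses, so $\mdset\le\dset$ and $\mduti\le\duti$.

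Next I would invoke \Cref{th:main}-(1): since $u_i$ is $\stgccvut_i$-strongly concave and the other utilities inherit their strong concavity moduli, $H'$ of the original game is strongly monotone; applying the theorem gives $\|\psi(\hxx)-\sxx\|^2\le\frac{1}{\stgccvut}K(\mdset,\mduti)\le\frac{1}{\stgccvut}K(\dset,\duti)$ for the SVWE $\hxx$ of $\tG\esnu(A)$, where $\psi$ spreads $\hxx$ back over $\I$. In particular the two coordinates of $\psi(\hxx)$ indexed by $i$ and $j$ are \emph{equal} (both equal $\hxx_n$), so
\[
\|\sxx_i-\sxx_j\|\le \|\sxx_i-\hxx_n\|+\|\hxx_n-\sxx_j\|\le 2\,\|\psi(\hxx)-\sxx\|\le \tfrac{2}{\sqrt{\stgccvut}}\,K(\dset,\duti)^{1/2}.
\]
This already gives the stated bound with the constant $\frac{2}{\sqrt{\stgccvut_i}}$ in place of $\frac{1}{\sqrt{\stgccvut_i}}+\frac{1}{\sqrt{\stgccvut_j}}$, which is weaker (since $\stgccvut_j$ might be smaller than $\stgccvut_i$). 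To get the sharper asymmetric constant claimed, I would instead run the argument through the proof of \Cref{th:main} rather than its statement: in the single inequality chain that bounds $\langle\g-\bh,\psi(\hxx)-\sxx\rangle$ from below by $\stgccvut\|\psi(\hxx)-\sxx\|^2$, one actually gets a per-population lower bound $\sum_\ell\stgccvut_\ell\|\psi(\hxx)_\ell-\sxx_\ell\|^2$, and applying this with the two relevant populations being $i$ (with modulus $\stgccvut_i$) in one auxiliary game and $j$ (modulus $\stgccvut_j$) in the mirror auxiliary game where $\X_n\eqd\X_j,u_n\eqd u_j$, then combining, yields $\|\sxx_i-\hxx_n\|\le\stgccvut_i^{-1/2}K^{1/2}$ and $\|\hxx_n-\sxx_j\|\le\stgccvut_j^{-1/2}K^{1/2}$ separately; the triangle inequality then gives exactly the coefficient $\stgccvut_i^{-1/2}+\stgccvut_j^{-1/2}$.

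The main obstacle is bookkeeping rather than a genuine difficulty: one must check that the smallness hypothesis $\mdset<\rho/2$ needed by \Cref{th:main} is available here, i.e. that the interior profile $\zz$ of \Cref{lm:intprofile} is uniform over all populations including $i,j$ — which it is, since $\zz$ is fixed independently of the clustering — and that enlarging one cluster to mass $2$ does not affect the constants $M$, $\rho$, $\Bdf$ appearing in $K$, which it does not because those are defined from $\GA'$. The only real care is in the sharper variant: extracting the per-population strong-monotonicity bound from the proof of \Cref{th:main} instead of using the global modulus $\stgccvut=\min_\ell\stgccvut_\ell$. If one is content with the constant $2/\sqrt{\stgccvut_i}$ (when $\stgccvut_j\ge\stgccvut_i$), the corollary is immediate from the black-box statement of \Cref{th:main}; the asymmetric refinement is what requires looking inside the proof.
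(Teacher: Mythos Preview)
Your strategy is exactly the paper's: the paper labels the result ``a direct corollary of \Cref{th:main}-(1)'' and gives no further proof, so the intended argument is precisely to merge $i$ and $j$ into a single cluster, leave every other population as a singleton cluster, and read off the bound from (the proof of) \Cref{th:main}.

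Two small corrections to your write-up. First, the black-box invocation of \Cref{th:main}-(1) does not go through as you state it: the corollary assumes only that $u_i$ and $u_j$ are strongly concave, not every $u_\ell$, so $H'$ need not be globally strongly monotone and the modulus $\stgccvut=\min_\ell\stgccvut_\ell$ is unavailable. You must therefore enter the proof of \Cref{th:main} from the outset (not merely for the asymmetric refinement). The master inequality in \Cref{app:proof:main},
\[
\langle\cc(\sxxag)-\cc(\hxxag),\sxxag-\hxxag\rangle+\textstyle\sum_\ell\langle\g'_\ell-\rr'_\ell,\sxx_\ell-\psi(\hxx)_\ell\rangle\leq K(\mdset,\mduti),
\]
already gives what you need: the first term is $\geq 0$ by monotonicity of $\cc$, each summand with $\ell\neq i,j$ is $\geq 0$ by mere concavity of $u_\ell$, and the two remaining summands are bounded below by $\stgccvut_i\|\sxx_i-\hxx_n\|^2$ and $\stgccvut_j\|\sxx_j-\hxx_n\|^2$. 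Hence a \emph{single} auxiliary game yields simultaneously $\|\sxx_i-\hxx_n\|\leq\stgccvut_i^{-1/2}K^{1/2}$ and $\|\sxx_j-\hxx_n\|\leq\stgccvut_j^{-1/2}K^{1/2}$, and the triangle inequality finishes. Your ``mirror auxiliary game'' is unnecessary---and would in fact not chain, since the two games would produce different centers $\hxx_n$.

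Second, your check of \Cref{assp:Xnandun}(1) is incomplete: setting $\X_n=\X_i$ also requires $\aff\X_j\subset\aff\X_i$, which you do not verify; this is an ambient hypothesis in the paper's applications but should be flagged.
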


\subsection{Combining the two steps to approximate a VNE of $\GA$}

The following theorem is the main result of the paper, which combines the two steps of approximation given in \Cref{th:pseudoVNE} and in \Cref{th:main}, in the computation of a VNE of the original game $\GA$.

\begin{theorem}[SVWE of $\tG\esnu(A)$ is close to VNEs  of $\G(A)$]\label{th:bound}
Under \Cref{assp_convex_costs,assp:Xnandun},  in an auxiliary game $\tG\esnu(A)$, $\mdset$ and $\mduti$ are defined by \Cref{eq:def_dset,eq:def_dut}, with $\mdset <\frac{\rho}{2}$. Let $\hxx$ be a SVWE of $\tG\esnu(A)$, $\xx\in \FX(A)$ be a VNE of  $\GA$, $\hxxag = \sum_{n\in \N}\xx_n$, $\xxag=\sum_{i\in \I} \xx_i$, and $K(\mdset,\mduti) $ the constant given by \eqref{eq:KboundExpression}.\\
(1) if $u_i$ is $\stgccvut_i$-strongly concave for each $i\in \I$, with $\stgccvut \eqd  \min_{i}\stgccvut_i $, then $\hxx$ is unique and
 \begin{equation*}
\|\psi(\hxx)-\xx\|  \leq \sqrt{\tfrac{1}{ \stgccvut}} K\left(\mdset,\mduti\right) ^{1/2}\hh  \+M\sqrt{\tfrac{2TC}{\stgccvut I}}\, ;
\end{equation*}
(2) if  $c_t$ is $\beta_t$-strictly increasing for each $t\in \T$, with $\beta\eqd \min_{t} \beta_t$, then $\hxxag$ is unique and
\begin{equation*}
\| \hxxag- \xxag \|  \leq \sqrt{\tfrac{1}{\beta}}  K\left(\mdset,\mduti\right) ^{1/2} \+ M \sqrt{\tfrac{2TC}{\beta I}}\, .
\end{equation*} 
\end{theorem}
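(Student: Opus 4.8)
The plan is to prove both parts by a single triangle‑inequality argument, inserting a symmetrical variational Wardrop equilibrium of the intermediate population game $\GA'$ between the SVWE $\hxx$ of the auxiliary game $\tG\esnu(A)$ and the VNE $\xx$ of $\GA$, and then invoking the two error bounds already established in \Cref{th:main} and \Cref{th:pseudoVNE}. Concretely, I would first fix a SVWE $\sxx\in\FX(A)$ of $\GA'$, whose existence is guaranteed by \Cref{prop:exist_ve}. For part~(1) I would then write
\begin{equation*}
\|\psi(\hxx)-\xx\|\leq \|\psi(\hxx)-\sxx\|+\|\sxx-\xx\|\ ,
\end{equation*}
and for part~(2), passing to aggregate profiles, $\|\hxxag-\xxag\|\leq\|\hxxag-\sxxag\|+\|\sxxag-\xxag\|$, where $\sxxag$ is the aggregate profile of $\sxx$. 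Both summands on each right‑hand side are exactly the quantities controlled by the two preceding theorems.

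For part~(1), the hypothesis that each $u_i$ is $\stgccvut_i$‑strongly concave yields, via \Cref{prop:monotonemap}-(2), that $H'$ is $\stgccvut$‑strongly monotone with $\stgccvut=\min_i\stgccvut_i$; this is precisely the assumption of \Cref{th:main}-(1), which gives both the uniqueness of $\hxx$ (and of $\sxx$) and the bound $\|\psi(\hxx)-\sxx\|^2\leq\frac{1}{\stgccvut}K(\mdset,\mduti)$, i.e. $\|\psi(\hxx)-\sxx\|\leq\sqrt{\tfrac{1}{\stgccvut}}\,K(\mdset,\mduti)^{1/2}$. Under the same strong‑concavity assumption, \Cref{th:pseudoVNE}-(1) applies and bounds the second summand by $M\sqrt{\tfrac{2TC}{\stgccvut I}}$ (and also re‑derives uniqueness of the VNE). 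Adding the two inequalities delivers the claimed estimate.

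For part~(2), the argument is symmetric: $\beta_t$‑strict increasingness of each $c_t$ implies, by \Cref{prop:monotonemap}-(3), that $H'$ is $\beta$‑aggregatively strongly monotone with $\beta=\min_t\beta_t$. Then \Cref{th:main}-(2) gives uniqueness of $\hxxag$ and $\|\hxxag-\sxxag\|^2\leq\frac{1}{\beta}K(\mdset,\mduti)$, while \Cref{th:pseudoVNE}-(2) gives $\|\xxag-\sxxag\|\leq M\sqrt{\tfrac{2TC}{\beta I}}$; the triangle inequality yields the stated bound.

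There is no genuine analytic difficulty here—the statement is a composition of \Cref{th:main} and \Cref{th:pseudoVNE}—so the ``main obstacle'' is purely a matter of bookkeeping: (i) checking that the hypotheses under which the two constituent theorems are phrased coincide (strong concavity of the $u_i$'s for the individual bound, strict monotonicity of the $c_t$'s for the aggregate bound), so that the \emph{same} intermediate equilibrium $\sxx$ is legitimate in both summands; (ii) verifying that $\psi(\hxx)$, $\sxx$ and their aggregates indeed lie in $\FX(A)$, resp. in $\Sxag\cap A$, so that the norms and the conclusions of \Cref{th:main} can be invoked—which carries along the standing requirement $\mdset<\rho/2$; and (iii) tracking the square roots so that $K(\mdset,\mduti)^{1/2}$, not $K(\mdset,\mduti)$, appears in the final bound.
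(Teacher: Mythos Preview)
Your proposal is correct and matches the paper's own proof, which simply states that the result ``is an implication of the inequalities given in \Cref{th:pseudoVNE,th:main}''; you have merely made the triangle-inequality step and the matching of hypotheses (via \Cref{prop:monotonemap}) explicit.
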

\proof:  This is an implication of the inequalities given in \Cref{th:pseudoVNE,th:main}.
\smallskip

Given the large game $\G(A)$ and a certain $N\in \nit^*$, \Cref{th:bound} suggests that we should find the auxiliary game $\tG\esnu$ that minimizes $K(\mdset,\mduti) $ in order to have the best possible approximation of the equilibria. This would correspond to a ``clustering problem'' given as follows:
\begin{equation} \label{eq:clusteringMinimizeK}
\min_{(\I_n)_n \in \mathcal{P}(\I) }\  \min_{(\X_n)_{n}} \ \min_{(u_n)_{n}} K(\mdset,\mduti) ,
\end{equation}
where $\mathcal{P}(\I)$ denotes the set of partitions of $\I$ of cardinal $N$, while $(\X_n)_{n\in\N}$ and $(u_n)_n $ are chosen according to \Cref{assp:Xnandun}.

The value of the optimal solutions of problem \eqref{eq:clusteringMinimizeK}, and thus of the quality of the approximation in \Cref{th:bound}, depends on the homogeneity of the $I$ players in $\I$ in terms of action sets and utility functions. The ``ideal'' case is given in \Cref{ex:homogenousPops} where $\I$ is composed of a small number $N$ of homogeneous populations and thus $K(\mdset,\mduti)=0$.

In general, solving \eqref{eq:clusteringMinimizeK} is a hard problem in itself.
 It is indeed a generalization of the $k$-means clustering problem \cite{lloyd1982least}  (with $k=N$ and considering a function of Hausdorff distances), which is itself NP-hard \cite{garey1982complexity}.
 In  \Cref{sec:appli}, we illustrate how we  use directly the $k$-means algorithm to compute efficiently an approximate solution $(\I_n, \X_n, u_n)_{n\in\N}$ in the parametric case.



\smallskip
Finally, the number $N$ in the definition of the auxiliary game should be chosen a priori  as a trade-off between the minimization of $K(\mdset,\mduti)$ and a sufficient minimization of the dimension. 
  Indeed, with $\N = \I$, $\X_i=\X_n$ and $u_n=u_i$, we get $\mduti=\mdset=0$. 
However, the aim of \Cref{th:bound} is to find an auxiliary game $\tG\esnu$ with $N \ll I$ so that the dimension of the GVIs characterizing the equilibria (and thus the time needed to compute their solutions) is significantly reduced, while ensuring a relatively small error, measured by $\mduti$ and $\mdset$.

\section{Application to demand response for Electric Vehicle smart charging}
\label{sec:appli}
\newcommand{\ux}{\underline{\x}}
\newcommand{\ox}{\overline{\x}}

Demand response (DR) \cite{ipakchi2009} refers to a set of techniques to influence, control or optimize the electric consumption of agents in order to provide some services to the grid, e.g. reduce production costs  and CO$_2$ emissions or avoid congestion \cite{PaulinTSG17}. The increasing number of  electric vehicles (EV) offers a new source of flexibility in the optimization of the production and demand, as electric vehicles require a huge amount of energy and enjoy a sufficiently flexible charging scheme (whenever the EV is parked).
 Because of the privacy  of each consumer or EV owner's information and the decentralized aspects of the DR problem, many relevant  works adopt a game theoretical approach by considering consumers as players minimizing a cost function and a utility \cite{saad2012game}.

In this section, we consider the consumption associated to electric vehicle charging on a set of 24-hour time-periods  $\T=\{1,\dots,T\}$, with $T=24$, indexing the hours from 10 \textsc{pm} to $9$\textsc{pm} the day after (including the night time periods where EVs are usually parked at home).

\subsection{Price functions: block rates energy prices}

As in the framework described in \cite{PaulinTSG17}, we consider a centralized entity, called the aggregator, who manages the aggregate flexible consumption. The aggregator interacts with the electricity market and energy producers, with his own objectives such as minimizing his cost or achieving a target aggregate demand profile.

The aggregator imposes electricity prices on each time-period. We consider prices taking the specific form of inclining block-rates tariffs (IBR tariffs, \cite{wang2017optimal}), i.e. a piece-wise affine function $c(.)$ which depends on the aggregate-demand $X_t=\sum_{i\in\I} \x\iti$ for each time-period $t$
, and is defined as follows:
\begin{equation}\label{eq:price}
\begin{split}
c(X)&= 1 + 0.1 X  \text{ if } X \leq 500 \\
  c(X)&=-49 + 0.2 X \text{ if } 500 \leq X \leq 1000 \\
  c(X) &=-349 + 0.5 X \text{ if } 1000 \leq X    \ .
\end{split}
\end{equation}
This function $c$ is continuous and convex. Those price functions are transmitted by the aggregator to each  consumer or EV owner. Thus, each consumer $i$ minimizes an objective function of the form \eqref{eq:cost_player_def}, with an energy cost determined by \eqref{eq:price} and a utility function $u_i$ defined below. An equilibrium gives a stable situation where each consumer minimizes her objective and has no interest to deviate from her current consumption profile.

\subsection{Consumers' constraints and parameters} \label{subsec:consumerParams}

\newcommand{\uux}{\bm{\ux}}
\newcommand{\oox}{\bm{\ox}}
We simulate the consumption of $I=2000$ consumers who have demand constraints of the form:\begin{equation}
\label{eq:set_Edemand}
\X_\i\hh=\hh\{\ \xx_\i \in \rit^T_+  :  \txt\sum_\t  \x\iti\hh =\hh m_i \text{ and }  \ux\iti \leq \x\iti \leq \ox\iti\}
\end{equation} 
where $m_i$ is the total energy needed by $i$, and $\ux\iti ,\ox\iti$ the (physical) bounds on the power allowed to her at time $t$.  
The utility functions have form $u_i(\xx_i)=- \omega_i \norm{\xx_i-\yy_i }^{2}$. 

The parameters are chosen as follows:
\begin{itemize}[wide]
\item $m_i$ is drawn uniformly between 1 and 30 kWh, which corresponds to a typical charge of a residential electric vehicle.
\item $\uux_i, \oox_i$:
 First, we generate, in two steps,  a continual set  of charging time-periods $\T_i=
\{ h_i - \frac{\tau_i}{2} , \dots , h_i + \frac{\tau_i}{2} \}$: 
\begin{itemize}
\item the duration $\tau_i$ is uniformly drawn from $ \{4,\dots,T \}$;
\item $h_i$ is then uniformly drawn from $\{ 1+\frac{\tau_i}{2} , \dots , T- \frac{\tau_i}{2}\}$. 
\end{itemize}
Next, for $t\notin \T_i$, let $\ux\iti=\ox\iti=0$.\\
Finally, for $t\in \T_i$, $\ux\iti$ (resp.  $\ox\iti$) is drawn uniformly from $[0,\frac{m_i}{\tau_i}]$ (resp. $[\frac{m_i}{\tau_i}, m_i]$).
\item $\omega_i$ is drawn uniformly from $[1,10]$.
\item $\yy\iti$ is taken equal to $\ox\iti$ on the first time periods of $\T_i$ (first available time periods) until reaching $m_i$ (which corresponds to a profile ``the sooner the better" or ``plug and charge'').
\end{itemize}

\subsection{Coupling constraints on capacities and limited variations}

We consider the following \emph{coupling} constraints  on the aggregate demand $\xxag$ which are often encountered in energy applications:
\begin{align}
\label{cst:ramp}
-50 \leq \xag_{T}-\xag_1 \leq 50 \\
\label{cst:capa} \xag_t \leq 1400 , \quad \forall \t \in \T
\end{align}

Here, Constraint \eqref{cst:ramp} imposes that the demand $X_T$ at the very end of the time horizon is relatively close to the first aggregate $X_1$, so that the demand response profiles computed for the finite time set $\T$ can be applied on a day-to-day, periodical basis.

Constraint \eqref{cst:capa} is a capacity constraint, 
 induced by the maximal capacity of the electrical lines or by the generation capacities of electricity producers.

These linear coupling constraints can be written in the closed form:
\begin{equation}
\label{eq:coupling_linear}
A \xxag \leq b \ , 
\end{equation}
where $A \in \M_{T+2,T}(\rit)$, $b\in \rit^{T+2}$.

\subsection{Computing populations with $k$-means}
\newcommand{\pp}{\bm{p}}
Since $I$ is very large, determining an exact VNE is computationally  demanding. Thus, we apply the clustering procedure described in \Cref{subsec:class} to regroup the players.

We use the $k$-means algorithm \cite{lloyd1982least}, where ``$k$''$=N$ is the number of populations (groups) to replace the large set of $I$ players. For each player $i\in \I$, we define her parametric description vector:
\begin{equation}
\pp_i = [ \omega_i, \yy_i, m_i, \uux_i, \oox_i ] \in \rit^{3T+2}\ .
\end{equation}
Then, the $k$-means algorithm finds an approximate solution of finding a partition $(S_n)_{1\leq n \leq N}$ of $\I$ into $N$ clusters. The algorithm solves the combinatorial minimization problem:
\begin{equation*}
\hh \min_{S_1,\dots, S_N} \hspace{-4pt} \sum_{1\leq n \leq N} \sum_{\pp \in S_n} \hh \norm{ \mathbb{E}_{S_n}\hspace{-2pt}(\pp) - \pp   }^2 \hh = \hspace{-5pt}\min_{S_1,\dots, S_N}  \hh \hspace{-3pt} \sum_{1\leq n \leq N} \hspace{-6pt} |S_n|  \mathbb{V}\text{ar}(S_n) ,
\end{equation*}
where $\mathbb{E}_{S_n}(\pp)=\frac{1}{|S_n|}\sum_{i\in S_n} \pp_i$ denotes the average value of $\pp$ over the set $S_n$. These average values are taken to be $w_n$, $\yy_n$, $m_n$, $\uux_n$ and $\oox_n$.


The simulations are run with different population numbers, with $N\ll I$  chosen among $\{5,10,20,50,100\}$.

\medskip 
Since the $k$-means algorithm minimizes the squared distance of the average vector of parameters in $S_n$ to the vectors of parameters of the points in $S_n$, the clustered populations obtained can be sub-optimal in terms of  $K(\mdset, \mduti)$. As explained above,  choosing the optimal populations $\N$, as formulated in problem $\eqref{eq:clusteringMinimizeK}$, is a complex problem in itself which deserves further research. Our example shows that the $k$-means algorithm gives a practical and efficient way to compute a heuristic solution in the case where $u_i$ and $\X_i$ are parameterized.

\subsection{Computation methods}

We compute a VNE (\Cref{def:ve-finite}) with the original set of $I$ players and the approximating SVWE (\Cref{def:pseudoVNE}) as solutions of the associated GVI \eqref{cond:ind_opt_ve}.

We employ a standard projected descent algorithm, as described in \cite[Algo. 2]{gentile2017nash} and recalled below in \Cref{algo:PGDFS}. It is adapted  to the subdifferentiable case that we consider in this work. In particular, the fixed step $\tau$ used in \cite{gentile2017nash} is replaced by a variable step $\tau^{(k)}=1/k$. The coupling constraint \eqref{eq:coupling_linear} is relaxed and the Lagrangian multipliers $\bm{\lambda} \in \rit_+^{T+2}$ associated to these constraints are considered as extra variables. Thus, we can perform the projections on the sets $\X_i$ and on $\rit_+^{T+2}$.
\begin{algorithm}[H]
\begin{algorithmic}[1] 
\Require ${\xx}^{(0)}, \bm{\lambda}^{(0)}$, stopping criterion 
\State $k \leftarrow 0$  \;
\While
{stopping criterion not true}
\For{$n=1$ to $N$} \label{line:algo-for}
\State take $g_n^{(k)} \in \partial_1 f_n(\xx_n^{(k)},\xxag^{(k)}) $ \; 
\State $\xx_n^{(k+1)} \leftarrow \Pi_{\X_n}\left(\xx_n^{(k)}-\tau^{(k)}(g_n^{(k)} + {\bm{\lambda}^{(k)}}^T A ) \right) $ 
\label{line:algo-proj}\;
\EndFor
\State $\bm{\lambda}^{(k+1)} \leftarrow \left( {\bm{\lambda}^{(k)} - \tau^{(k)} (b-2 A\xxag^{(k+1)} + A \xxag^{(k)} } \right)^+$ \;
\State $  k \leftarrow k+1 $ \;
\EndWhile
\end{algorithmic}
\caption{Projected Descent Algorithm }
\label{algo:PGDFS}
\end{algorithm}

The convergence of \Cref{algo:PGDFS} is shown in \cite[Thm.3.1]{cohen1988auxiliary}.
The stopping criterion that we adopt here is the distance between two iterates: the algorithm stops when $\norm{ (\lambda^{(k+1)},\xx^{(k+1)})-(\lambda^{(k)},\xx^{(k)}) }_2 \leq 10^{-3}$.

Due to the form of the strategy sets considered \eqref{eq:set_Edemand}, the projection steps (Line 5) can be computed efficiently and exactly in $\mathcal{O}(T)$ with the Brucker algorithm \cite{brucker1984n}. However, if we consider more general strategy sets (arbitrary convex sets), this projection step can be costly: in that case, other algorithms such as \cite{fukushima1986relaxed} would be more efficient.                    

\subsection{A trade-off between precision and computation time}

\begin{figure} 
\hspace{-13pt}
\includegraphics[width=1.02\columnwidth]{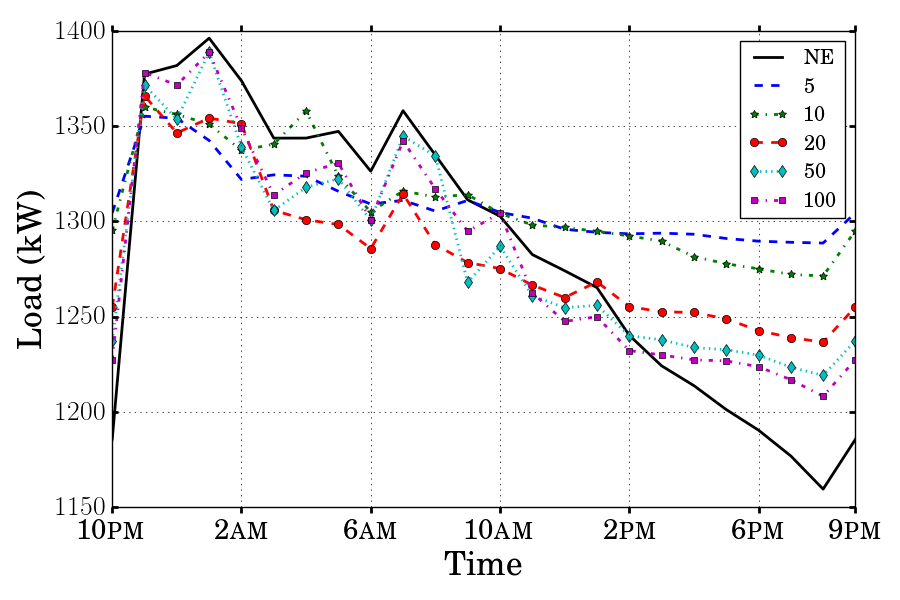}
\caption{Convergence of the aggregate SVWE profile of auxiliary games to a VNE profile of the original game. 
}
\label{fig:cvg_pne_pop}
\end{figure}

Simulations were run using Python on a single core Intel Xeon @3.4Ghz and 16GB of RAM.

\Cref{fig:cvg_pne_pop} shows the different aggregate SVWE profiles $\hxxag\esnu$ obtained for sets $\snu$ of different sizes, as well as a VNE of the original game for comparison. 
Thanks to the specific form of the strategy sets \eqref{eq:set_Edemand}---which enables a fast projection---we are able to compute a VNE $\xxag$ of the original game with $I=2000$ players.

\hspace{-5pt}\begin{figure}
     \centering
     \subfloat[][]{\includegraphics[width=0.485\columnwidth]{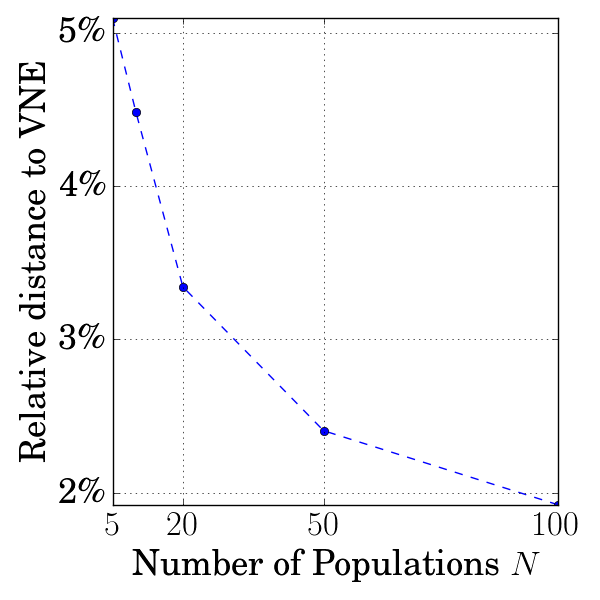} \label{fig:cvg_pne_dists} }
     \subfloat[][]{\includegraphics[width=0.485\columnwidth]{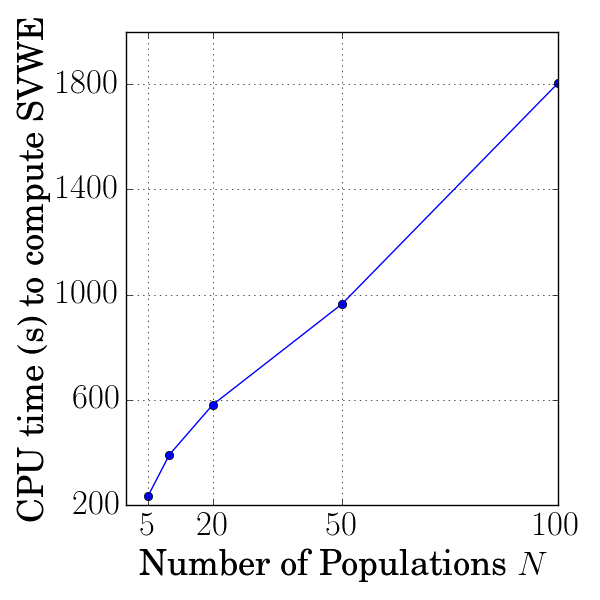}
     \label{fig:cvg_pne_CPU}}
     \caption{(a) Relative error to actual VNE ; (b) Time to compute  SVWE. \textit{ The time needed to compute SVWE (b) increases roughly linearly with $N$, at a faster rate than the error on the approximation of the VNE decreases (a). }
     }
\end{figure}

%

\Cref{fig:cvg_pne_dists} and \Cref{fig:cvg_pne_CPU} show  the two main metrics  to consider to choose a relevant number of populations $N$: the precision of the SVWE approximating the equilibrium (measured by the distance of the  aggregate SVWE profile to the aggregate profile of the VNE computed along), and the CPU time needed to compute the SVWE.

First notice on \Cref{fig:cvg_pne_dists} that the distance between the aggregate equilibrium profile and its estimation decreases with $N$ at a sublinear rate. 
This is partially explained in light of \Cref{th:main} and in addition with the following remarks:
\begin{itemize}[wide]
\item the Hausdorff distance of two parameterized polyhedral sets is Lipschitz continuous w.r.t their parameter vectors (generalization of \cite{batson1987combinatorial}), which ensures that there is $K>0$ s.t.  for all $n$:
\vspace{-0.4cm}
\begin{equation*}
\dset\esnu_n = \max_{\i \in \I\esnu_n} d_{H}\left( \X_\i, \X_{n}\esnu \right) \leq K \norm{ \begin{pmatrix}
m_n \\ \ux_n \\ \ox_n 
\end{pmatrix}-\begin{pmatrix}
m_i \\ \ux_i \\ \ox_i
\end{pmatrix}  }  \ ;
\end{equation*}\
\item similarly, as subgradients of utility functions are reduced to a point, one has, for all $n$:
\begin{align*}
\duti_n &= \max_{i\in } \max_{\xx \in \M} 2 \norm{ \omega_n (\xx  - \yy_n)- \omega_i (\xx-\yy_i)} \\
& = \mathcal{O}\left( | \omega_n-\omega_i| + \norm{ \yy_n- \yy_i} \right) \ .
\end{align*}
\end{itemize}

\Cref{fig:cvg_pne_CPU} shows the CPU time needed to compute the WE with a stopping criterion of a maximum improvement between iterates of $\norm{ (\lambda^{(k+1)},\xx^{(k+1)})-(\lambda^{(k)},\xx^{(k)}) }_2 \leq 10^{-3}$. 
Computing a solution of the clustering problem with the $k$-means algorithm takes, for each value of $N$, less than ten seconds. This time is negligible in comparison to the time needed for convergence of \Cref{algo:PGDFS}.

As a reference time,  to compute a VNE of the original game (observed on \Cref{fig:cvg_pne_pop}) with the same stopping criterion and the same CPU configuration, we needed 3 hours 26 minutes. This is more than six times longer than the CPU time to compute the SVWE with one hundred populations.

On this figure, we see that the CPU time evolves linearly with the number of populations $N$. This  is  explained by the structure of \Cref{algo:PGDFS}, as each iteration $k$  is executed in a time proportional to $N$ due to the \textbf{for} loop. 


Last, one observes from \Cref{fig:cvg_pne_dists} that, in our example, the error between the aggregate demand profile at equilibrium  and its approximation is between 2\% and 5\%, which remains significant. However, as pointed out in \Cref{sec:approxRes}, the quality of the approximation depends on the heterogeneity of the set of players $\I$. In the example of this section, as the parameters are drawn uniformly (see \Cref{subsec:consumerParams}), the set of players $\I$ presents a large variance so that it is a ``worst'' case as opposed to the case of \Cref{ex:homogenousPops} which is ``optimal''.

\section{Conclusion}

This paper shows that equilibria in splittable congestion games with a very large number of atomic players can be approximately computed with a Wardrop equilibrium of an auxiliary population game of smaller dimension. 
Our results give explicit bounds on the distance of this approximating equilibrium to the equilibria of the original large game. 
These theoretical results can be used in practice to solve, by an iterative method, complex nonconvex bi-level programs where the lower level is the equilibrium of a large congestion game, for instance, to optimize tariffs or tolls for the operator of a network.
A detailed analysis of such a procedure would be an extension of the present work.

\section{Acknowledgments}

We thank the PGMO foundation for the financial support to our project \textit{``Jeux de pilotage de flexibilit\'es de consommation \'electrique : dynamique et aspect composite".}

\appendices

\crefalias{section}{appsec}

\section{Proof of \Cref{prop:subgradients-sets}: Expressions of Subgradients}
\label{app:propSubgradients}

Recall that $\hat{f}_i(\xx_i,\xxag_{-i}) \eqd f_i(\xx_i, \xxag_{-i}+\xx_i)$. 
According to \cite[Proposition 16.6]{combettes2011monotone}, $\partial_1 \hf_\i(\xx_\i, \xxag_{-i})\subset \{(I_T,I_T) \g :\g\in \partial \psi_\i(\xx_\i)\}$, where $\partial f_\i(\xx_\i, \xxag_{-i}+\xx_i)$ is the subdifferential of $\psi_i(\cdot)\eqd f_\i(\cdot, \xxag_{-i}+\cdot)$, at $\xx_\i$. On the other hand, according to \cite[Proposition 16.7]{combettes2011monotone}, $\partial \psi_\i(\xx_\i)$ is a subset of:
\begin{equation*}
\begin{split}
\big\{ 
(\g_{i,1} , \g_{i,2} ): 
& \g_{i,1} \in\partial_1 f_\i(\ww_\i, \yyag)|_{\ww_i=\xx_i,\yyag=\xxag_{-i}+\xx_i}, \, \\ 
& \g_{i,2} \in \partial_2 f_\i(\ww_\i, \yyag)|_{\ww_i=\xx_i,\yyag=\xxag_{-i}+\xx_i}\big\} \ .
\end{split}
\end{equation*} 
 Therefore, $\partial_1 \hf_\i(\xx_\i, \xxag_{-i})$ is a subset of:
 \begin{equation*}
\begin{split}
& \{\cc(\xxag) \+\g'_{i,1}\hh + \g_{i,2} \hh:  \g'_{i,1} \in \partial (-u_i)(\xx_i), \, 
\g_{i,2} \in \partial_2 f_\i(\xx_\i, \xxag)\} \\ 
& =   \{\cc(\xxag)+\g'_{i,1} + (a_{i,t}x_{i,t})_t:   \\   & \ \ \ \ \ \ \g'_{i,1} \in \partial (-u_i)(\xx_i),  \, a_{i,t} \in \partial c_\t(\xag_t) \,\forall t\in T\} \ . \
\end{split}
\end{equation*}
 By the definition of subdifferential, it is easy to show that $ \{\cc(\xxag)+\g'_{i,1} + (a_{i,t}x_{i,t})_t:   \g'_{i,1} \in \partial (-u_i)(\xx_i), \, a_{i,t} \in \partial c_\t(\xag_t) \,\forall t\in T\} \subset \partial_1 \hf_\i(\xx_\i, \xxag_{-i})$.

The proof for $\partial_1 f_\i(\xx_\i, \xxag)$ is similar.

\section{Proof of \Cref{prop:exist_ve}: Existence of equilibria}
\label{app:proof:exist_ve}
It is easy to see that $\hf_i(\cdot,\xxag_{-i})$ is  convex on $\X_i$ for all $\xxag_{-i}$ in $\sum_{j\in \I\setminus\{i\}} \X_j$,  $f_i(\cdot,\xxag)$ is convex on $\X_i$ for all $\xxag\in \Sxag$, and $H$ and $H'$ are nonempty, convex, compact valued, upper hemicontinuous correspondences. Then \cite[Corollary 3.1]{chanpang1982gqvip} shows that the GVI problems \eqref{cond:ind_opt_ve} and \eqref{eq:def-pseudo} both admit a solution on the finite dimensional convex compact $\FX(A)$. 

\section{Proof of \Cref{th:unique_vwe}: uniqueness of equilibria}
\label{app:proof-uniquenessVWE}
We prove for SVWE only and the proof for VNE is the same. Suppose that $\xx,\yy\in \FX(A)$ are both SVWE, with $\xxag=\sum_i\xx_i$ and $\yyag=\sum_i \yy_i$. According to the definition of SVWE, there is $\g\in H'(\xx)$ an $\bh\in H'(\yy)$ such that $ \sum_i \langle \g_\i, \yy_\i - \xx_\i \rangle \geq 0$ and $\sum_{i} \langle \bh_\i, \xx_\i - \yy_\i \rangle \geq 0$. Adding up these two inequalities yields:
\begin{equation*}
\txt\sum_{\i}\langle \g_\i - \bh_\i, \yy_\i - \xx_\i \rangle \geq 0. 
\end{equation*}
(1) If $H'$ is a strictly monotone, then $\sum_{\i}\langle  \g_\i - \bh_\i, \xx_\i - \yy_\i \rangle = 0$ and thus $\xx=\yy$.  \\
(2-3) If $H'$ is an aggregatively strictly monotone, then $\sum_{\i}\langle  \g_\i - \bh_\i, \xx_\i - \yy_\i \rangle = 0$ and thus $\xxag=\yyag$.  If there is no aggregative constraint and $u_\i$ is strictly concave, then $\xx_\i$ (resp. $\yy_\i$) is the unique minimizer of $f_\i(\cdot, \xxag)$ (resp. $f_\i(\cdot, \yyag)$). Since $\xxag=\yyag$, one has $\xx_\i=\yy_\i$.

\section{Proof of \Cref{prop:monotonemap}: monotonicity of $H'$}
\label{app:proof-monotonicity}
\noindent (1) Let $\xx,\yy\in \FX$ and $\xxag=\sum_i \xx_i$, $\yyag=\sum_i \yy_i$. Recall that
\begin{align*}
& \partial_1 f_i(\xx_i, \xxag)=\{\cc(\xxag)+\g:\g\in \partial (-u_i(\xx_i) \} \ , \\
& \partial_1 f_i(\yy_i, \yyag)=\{\cc(\yyag)+\bh:\bh\in \partial (-u_i)(\yy_i) \}  \ . 
\end{align*}
Let $\g_i \in  \partial (-u_i)(\xx_i)$ and $\bh_i \in  \partial (-u_i)(\yy_i)$. One has $\langle  \g_i(\xx_i)-\bh_i (\yy_i), \xx_i - \yy_i \rangle\geq 0$ because $u_i$ is concave so that $\partial (-u_i)$ is monotone on $\X_i$. Then we get:
\begin{align*}
& \txt\sum_i \langle (\cc(\xxag)+\g_i(\xx_i))-(\cc(\yyag)-\bh_i(\yy_i)), \xx_i - \yy_i \rangle  \\ 
= & 
\langle \cc(\xxag)-\cc(\yyag), \xxag-\yyag\rangle + \txt\sum_i \langle  \g_i (\xx_i) - \bh_i (\yy_i), \xx_i - \yy_i \rangle \\ 
  \geq  &  0
\end{align*}
 because $\cc$ is monotone. Hence $H'$ is monotone.
 
\noindent (2)  By the definition of $\stgccvut_i$-strong concavity:
\begin{align*}
& \txt\sum_i \langle  \g_i (\xx_i) \hh- \hh\bh_i (\yy_i), \xx_i\hh -\hh \yy_i \rangle \hh \\ 
 & \geq \txt\sum_i \stgccvut_i \|\xx_i\hh-\hh\yy_i\|^2 \hh\geq \hh\stgccvut \|\xx\hh-\hh\yy\|^2 \ .
\end{align*}

 \medskip
\noindent (3) By the definition of $\beta_t$-strong monotonicity:
\begin{align*}
&\langle \cc(\xxag)-\cc(\yyag), \xxag-\yyag\rangle\\
& = \txt\sum_{t\in \T} \langle c_t(\xag_t) - c_t(\yag_t), \xag_t - \yag_t \rangle \\
& \geq \txt\sum_t\beta_t \|\xag_t- \yag_t\|^2 \geq \beta\|\xxag-\yyag\|^2 \ .
\end{align*}

\section{Proof of \Cref{th:VNEareclose} : VNEs are close to each other}
\label{app:proof-VNEclose}

%

(1) Let $\xx,\yy\in \FX$ be two VNEs. Then, by \eqref{cond:ind_opt_ve}, there are $\g_i \in \partial_1 \hf_i(\xx_i,\xxag_{-i})$ and $\bh_i \in \partial_1 \hf_i(\yy_i,\yyag_{-i})$ for each $i$  with $\g_i = \cc(\xxag)+\g'_i +(a\iti x\iti)_{t}$, $\bh_i=\cc(\yyag)+\bh'_i+(b\iti y\iti)_{t}$, 
where $\g'_i\in \partial (-u_i)(\xx_i)$, $\bh'_i\in \partial (-u_i)(\yy_i)$, $a\iti \in \partial c_t(\xag_t)$ and $b\iti \in \partial c_t(\yag_t)$ for all $t$, such that $\sum_i \big\langle\g_i, \yy_i - \xx_i \big\rangle\geq 0$ and $\sum_i \big\langle\bh_i, \xx_i- \yy_i \big\rangle\geq 0$.

Summing up these two inequalities yields:
\renewcommand{\-}{\hspace{-2pt}-\hspace{-2pt}}
\begin{align*}
0&\leq   \txt\sum_i \langle \g_i -\bh_i, \yy_i - \xx_i \rangle  \\
= &   \txt\sum_i\! \langle\cc(\xxag\!)\+\g'_i \+(a\iti x\iti)_{t} \-\cc(\yyag\!)\-\bh'_i \-(b\iti y\iti)_{t}, \yy_i \-\xx_i \rangle \\
=&  \txt\sum_i \langle\cc(\xxag) -\cc(\yyag), \yy_i -\xx_i \rangle +\sum_i \langle\g'_i -\bh'_i , \yy_i -\xx_i \rangle \\ 
&+ \txt\sum_i \langle(a\iti x\iti)_{t}-(b\iti y\iti)_{t}, \yy_i -\xx_i \rangle \\
=& \langle\cc(\xxag) -\cc(\yyag), \yyag -\xxag \rangle + \txt\sum_i \langle\g'_i -\bh'_i , \yy_i -\xx_i \rangle \\
&+ \txt\sum_i \langle(a\iti x\iti)_{t}-(b\iti y\iti)_{t}, \yy_i -\xx_i \rangle \ .
\end{align*}
Therefore,
\begin{align*}
&\langle\cc(\xxag) -\cc(\yyag), \xxag -\yyag \rangle +\txt\sum_i \langle\g'_i -\bh'_i , \xx_i -\yy_i \rangle\\
&  \leq - \txt\sum_{i}\sum_{t} (a\iti x\iti - b\iti y\iti)(x\iti - y\iti) \\
&\leq \txt\sum_{i,t} ( 2C\tfrac{M}{I} )(2\tfrac{ M}{I})
 = \tfrac{4TC M^2}{I} \ .
\end{align*}
Since $\cc$ is monotone and so are $\partial u_i$'s because $u_i$'s are concave, $\langle\cc(\xxag) -\cc(\yyag), \xxag -\yyag \rangle\geq 0$, $\sum_i \langle\g'_i -\bh'_i , \xx_i -\yy_i \rangle \geq 0$.

If for each $i$, $u_i$'s are $\stgccvut_i$-strongly concave, then $ \stgccvut \sum_i  \|\xx_i -\yy_i \|^2 \leq \sum_i \stgccvut_i   \|\xx_i -\yy_i \|^2  \leq \sum_i \langle\g'_i -\bh'_i , \xx_i -\yy_i \rangle \leq  \tfrac{4TC M^2}{I} $
so that $\|\xx-\yy\| \leq 2M\sqrt{\frac{TC}{\stgccvut I}} $.

If for each $t$, $c_t$ is $\beta_t$-strictly increasing, then $\beta \|\xxag-\yyag\|^2\leq\langle\cc(\xxag) -\cc(\yyag), \xxag -\yyag \rangle \leq   \tfrac{4TC M^2}{I}$ thus $\|\xxag - \yyag\| \leq 2M \sqrt{\frac{TC}{\beta I}}$.

\section{Proof of \Cref{prop:continuiteNE}: SWE behavior for similar players}
\label{app:proof:lmcontinuite}
Let $\g'_i(\sxx_i)\in \partial (-u_i)(\sxx_i)$ be s.t., for all $\xx_i\in \X_i$, $ \langle \cc(\sxxag) + \g'_i(\sxx_i), \sxx_i -\xx_i \rangle \leq 0$. 
Let $\bh'_i(\sxx_j) \in \partial (-u_i)(\sxx_j)$ be such that $\|\bh'_i (\sxx_j)- \g'_j (\sxx_j)\|\leq \duti$. 
Then, by the strong concavity of $u_i$:
\begin{align*}
&\stgccvut_i \norm{\sxx_i  -\sxx_j}^2  \leq   \langle \g'_i(\sxx_\i)-\bh'_i(\sxx_j), \sxx_i-\sxx_j \rangle \\
= & \langle \g'_i(\sxx_i)-\g'_j(\sxx_j)+\g'_j(\sxx_j)-\bh'_i(\sxx_j), \sxx_i-\sxx_j \rangle   \\
\leq & \langle \g'_i(\sxx_i) - \g'_j(\sxx_j),\sxx_i-\sxx_j  \rangle+\duti 2m \\
= & \langle \g'_i( \sxx_i)\hh +\hh\cc(\sxxag),\sxx_\i\hh -\hh \sxx_j  \rangle  \\
& + \langle \g'_j(\sxx_j)\hh + \hh \cc(\sxxag) ,\sxx_j \hh -\hh \sxx_i \rangle \hh +\hh 2 \duti m\\
= &  \langle \g'_i(\sxx_i) +\cc(\sxxag),\sxx_\i-\Pi_{i}(\sxx_j)+\Pi_{i}(\sxx_j)-\sxx_j  \rangle \\
 &\hh+\hh \langle \g'_j(\sxx_j) \hh+\hh\cc(\sxxag) ,\sxx_j\hh -\hh\Pi_{j}(\sxx_i)\hh+\hh\Pi_{j}(\sxx_i)\hh-\hh \sxx_i \rangle \hh+\hh 2 \duti m\\
\leq &  \langle \g'_i(\sxx_i) +\cc(\sxxag),\Pi_{i}(\sxx_j)-\sxx_j  \rangle  \\ 
& +\langle \g'_j(\sxx_j) +\cc(\sxxag) ,\Pi_{j}(\sxx_i)- \sxx_i \rangle+ 2\duti m\\
\leq & (\Bcuti+\Bcutj+2\Bc) \delta + 2\duti m\ .
\end{align*}
where $\Pi_i$ (resp. $\Pi_j$) is the projector on $\X_i$ (resp. $\X_j$).

%

\section{Proof of \Cref{lm:intprofile}: Existence of interior profile}
\label{app:proof:lm-intprofile}

Let $\bar{\xx}\in \FX$ be s.t. $d(\bar{\xx}_\i, \rbd \X_\i)=\max_{\xx\in \X_i}d(\xx,\rbd \X_i) \eqd \eta_i$, for all $i$. Denote $\bar{\xxag}=\sum_i \bar{\xx}_i$ and 
$\eta=\min_{i} \eta_i>0$. 

Let $\yy\in\FX(A)$ and $\yyag=\sum_i \yy_i$ be s.t. $d(\yyag, \rbd A)=\max_{\xxag\in \Sxag\cap A}d(\xxag,  \rbd A)$.  Denote $t=\frac{d(\yyag, \rbd A)}{3M}$. 

 Define $\zz =\yy - t(\yy -\bar{ \xx})\in \FX$. Let $\zzag=\sum_i \zz_i$.

Firstly, $\|\yyag- \zzag\|=t\|\yyag-\bar{\xxag}\|\leq t 2M\leq \frac{2}{3}d(\yyag, \rbd A)$, hence $\zzag \in \Sxag\cap \rlt  A$, where $\rlt$ means the relative interior. 
Besides, for any $\i$, $\zz_i =\yy_\i - t(\yy_\i - \bar{\xx}_\i) $. Since $d(\bar{\xx}_\i, \rbd \X_\i)\geq \eta$, $\yy_\i \in \X_\i$, and $\X_\i$ is convex, one has $d(\zz_\i, \rbd \X_\i ) \geq \eta t = \frac{\eta}{3M}d(\yyag, \rbd A)$. Finally, define $\rho \eqd \frac{\eta}{3M}d(\yyag, \rbd A)$. 
%
%

\section{Proof of \Cref{th:main}: approximation of SVWE}
\label{app:proof:main}
%
\begin{lemma}\label{lm:FY}~~\\
(1) For each $n\in \N$ and $\xx\in \X_n$, if $d(\xx, \rbd \X_n)>\dset_n$, then $\xx\in \X_i$ for each $i\in \I_n$.\\
(2) For each $n\in \N$, $i\in \I_n$ and $\xx\in \X_i$, if $d(\xx, \rbd \X_i)>\dset_n$, then $\xx\in \X_n$.
\end{lemma}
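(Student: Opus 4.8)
The plan is to prove the two statements of \Cref{lm:FY} symmetrically, using the definition of the Hausdorff distance together with the affine-hull compatibility built into \Cref{assp:Xnandun}. Recall that $d_H(\X_i,\X_n)\le\dset_n$ means that every point of $\X_i$ is within distance $\dset_n$ of $\X_n$ and vice versa. The key geometric fact I will exploit is this: if $S\subset S'$ are convex sets with $\aff S=\aff S'$, and $x\in S'$ satisfies $d(x,\rbd S')>\delta$ while $d_H(S,S')\le\delta$, then $x\in S$. Actually, since in our setting neither $\X_i\subset\X_n$ nor $\X_n\subset\X_i$ need hold, I will instead argue directly from a ``thickening'' characterization of the Hausdorff distance.

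For part~(1): fix $n\in\N$, $i\in\I_n$, and $\xx\in\X_n$ with $d(\xx,\rbd\X_n)>\dset_n$. By \Cref{assp:Xnandun}, $\aff\X_i\subset\aff\X_n$; moreover since $d_H(\X_i,\X_n)\le\dset_n<\infty$ and both sets are nonempty, they have the same affine hull, so $\xx\in\aff\X_i$. Suppose for contradiction that $\xx\notin\X_i$. Since $\X_i$ is convex and closed and $\xx\in\aff\X_i\setminus\X_i$, the segment from $\xx$ to any interior point (relative to $\aff\X_i$) of $\X_i$ crosses $\rbd\X_i$; let $\yy$ be the point where it meets $\X_i$, so $\yy\in\rbd\X_i\subset\X_i$ and $\|\xx-\yy\|=d(\xx,\X_i)$. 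By $d_H(\X_i,\X_n)\le\dset_n$ there is $\zz\in\X_n$ with $\|\yy-\zz\|\le\dset_n$. The point $\zz$ lies in $\X_n$, but I claim $\zz$ is within $\dset_n$ of $\rbd\X_n$, contradicting $d(\xx,\rbd\X_n)>\dset_n$ once I locate $\zz$ appropriately near $\xx$. The cleaner route: because $\yy\in\rbd\X_i$ and $d_H(\X_i,\X_n)\le\dset_n$, there is a point of $\rbd\X_n$ within $2\dset_n$... this is getting delicate, so instead I will use the standard inclusion $\X_n\subset\X_i+\dset_n \bar B$ and $\X_i\subset\X_n+\dset_n\bar B$ (closed ball in $\aff\X_n$), which gives $\X_n\subset \X_i + \dset_n\bar B$; hence any $\xx\in\X_n$ lies within $\dset_n$ of $\X_i$, and the hypothesis $d(\xx,\rbd\X_n)>\dset_n$ forces the nearest point of $\X_i$ to $\xx$ to be interior to $\X_n$, from which $\xx$ being within $\dset_n$ of $\X_i$, combined with convexity, yields $\xx\in\X_i$. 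Part~(2) is the mirror image with the roles of $\X_i$ and $\X_n$ exchanged, using the other Hausdorff inclusion.

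The main obstacle I anticipate is making the ``a point strictly interior to $\X_n$ that is within $\dset_n$ of $\X_i$ must lie in $\X_i$'' step fully rigorous, since it is exactly the kind of claim that is geometrically obvious but needs the affine-hull coincidence to rule out degenerate low-dimensional configurations. I would handle this by working entirely inside the common affine hull $\aff\X_n=\aff\X_i$, reducing to the case of full-dimensional convex bodies in $\rit^{\dim}$, where the statement ``$d_H(S,S')\le\delta$ and $d(x,\rbd S')>\delta$ with $x\in S'$ imply $x\in S$'' follows because $B(x,\delta)\subset S'\subset S+\delta\bar B$ would otherwise be contradicted: if $x\notin S$, separate $x$ from $S$ by a hyperplane, translate it by $\delta$ toward $x$, and exhibit a point of $B(x,\delta)\cap S'$ at distance $>\delta$ from $S$. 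Once this lemma is in place both (1) and (2) are immediate one-line applications, so the bulk of the writing is this separation argument and the verification that the affine hulls coincide.
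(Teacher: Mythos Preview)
Your final separation-plus-translation argument is exactly the paper's proof. The paper takes $\yy=\Pi_{\X_i}(\xx)$, sets $\zz=\xx+\dset_n\frac{\xx-\yy}{\|\xx-\yy\|}$, observes that $\zz\in\X_n$ (because $\|\zz-\xx\|=\dset_n<d(\xx,\rbd\X_n)$ and the direction $\xx-\yy$ is parallel to $\aff\X_n$), and then gets $d(\zz,\X_i)=d(\xx,\X_i)+\dset_n>\dset_n$, contradicting $d_H(\X_i,\X_n)\le\dset_n$. Your separating hyperplane has normal $\frac{\xx-\Pi_{\X_i}(\xx)}{\|\xx-\Pi_{\X_i}(\xx)\|}$ and your translated witness point is this same $\zz$, so once the false starts are removed the two arguments coincide; part~(2) is handled symmetrically in both.

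One correction, though: your assertion that ``$d_H(\X_i,\X_n)\le\dset_n<\infty$ and both sets are nonempty'' forces $\aff\X_i=\aff\X_n$ is false (a single point and a small disk containing it give a counterexample). The paper neither needs nor claims this. It uses only $\aff\X_i\subset\aff\X_n$ from \Cref{assp:Xnandun}, which is enough: since $\yy\in\X_i\subset\aff\X_i\subset\aff\X_n$ and $\xx\in\X_n\subset\aff\X_n$, the direction $\xx-\yy$ is automatically parallel to $\aff\X_n$, so $\zz\in\aff\X_n$ and the relative-interior condition places $\zz$ in $\X_n$. You never need $\xx\in\aff\X_i$, and you do not need both bodies to be full-dimensional in a common affine space; drop that claim and run the argument directly inside $\aff\X_n$.
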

\begin{proof}[Proof of \Cref{lm:FY}]
(1) Suppose  $\xx \notin \X_i$. Let $\yy \eqd \Pi_{\X_i}(\xx)\neq \xx$. As $\yy \in \aff \X_i \subset \aff \X_n $, then $\xx-\yy \in \aff \X_n$. Let $\zz \eqd \xx +  \dset_n \frac{ \xx-\yy}{\norm{\xx-\yy}}$. Then, $\zz \in \X_n$  because $\norm{\zz - \xx } \leq \dset_n$.
By the convexity of $\X_i$ and the definition of $\yy$, we have $d(\zz,\X_i)=  d(\xx,\X_i)+ \dset_n> \dset_n$,  contradicting the fact that $\dset_n \geq  d_H(\X_i,\X_n)$. (2) Symmetric proof.
\end{proof}

\begin{lemma}\label{lem:dist_agg_genized_sets}
Under \Cref{assp_convex_costs}, if $\mdset <\frac{\rho}{2}$, then\\
(1) for each $\xx \in \FX\esnu(A)$, there is $\ww\in \FX(A)$ such that $\|\ww_i - \psi_i(\xx)\| \leq 4m \tfrac{\mdset}{\rho}$ for each $i\in \I$;
\\
(2) for each $\xx\in \FX(A)$, there is $\ww\in \FX\esnu(A)$ such that $\|\ww_n-\bpsi_n(\xx)\|\leq 2m I_n\frac{\mdset}{\rho}$ for each $n\in \N$.
\end{lemma}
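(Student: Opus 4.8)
The plan is to obtain $\ww$ by nudging the naive lift $\psi(\xx)$ (for part~(1)) or average $\bpsi(\xx)$ (for part~(2)) toward the interior profile $\zz$ furnished by \Cref{lm:intprofile}, by a fraction $s$ of order $\mdset/\rho$ (legitimate since $\mdset<\rho/2$). Two facts drive the argument. First, $\zz_i$ carries a relative ball of radius $\rho$ inside $\X_i$, whereas each coordinatewise correction needed to fix the naive profile has size at most $\mdset$; hence a convex combination putting weight $s\gtrsim\mdset/\rho$ on $\zz$ forces every coordinate back into its own strategy set, passing between membership in $\X_i$ and in $\X_n$ via \Cref{lm:FY}. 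Second, $A$ is convex and $\zzag\eqd\sum_i\zz_i$ lies \emph{deep} inside $A$: reading off the construction in the proof of \Cref{lm:intprofile} one gets $d(\zzag,\rbd A)\ge\frac{M\rho}{\eta}\ge I\rho$, a depth growing like $M=Im$; consequently the nudge also keeps the aggregate feasible, since the drift it introduces (at most $I\mdset$, coming from coordinatewise projections in part~(1)) is swallowed by this depth with a step $s$ that does not scale with $I$. I will also use the elementary bounds $\norm{\zz_i},\norm{\xx_i}\le m$, $\norm{\xxag}\le M$, and $\rho\le\tfrac{2m}{3}$, all immediate from the definitions of $m$, $M$, $\rho$.

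For part~(1): given $\xx\in\FX\esnu(A)$ with aggregate $\xxag=\sum_n I_n\xx_n\in A$, set $q_i\eqd\Pi_{\X_i}(\xx_n)$ for $i\in\I_n$ — so $q_i\in\X_i$ and $\norm{q_i-\xx_n}\le d_H(\X_n,\X_i)\le\mdset$ — and then $\ww_i\eqd(1-s)q_i+s\zz_i$ with $s\eqd\tfrac{3\mdset}{2\rho}\in(0,1)$. Convexity of $\X_i$ gives $\ww_i\in\X_i$, hence $\sum_i\ww_i\in\Sxag$; and $\sum_i\ww_i=\big((1-s)\xxag+s\,\zzag\big)+(1-s)\sum_n\sum_{i\in\I_n}(q_i-\xx_n)$, where the first term lies in $A$ at depth $\ge s\,d(\zzag,\rbd A)\ge sI\rho$ while the perturbation has norm $\le(1-s)I\mdset\le sI\rho$, so $\sum_i\ww_i\in A$ and $\ww\in\FX(A)$. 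Finally $\norm{\ww_i-\psi_i(\xx)}\le(1-s)\mdset+s\norm{\zz_i-\xx_n}\le\mdset+2ms\le\tfrac{4m\mdset}{\rho}$, using $\rho\le m$.

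For part~(2): given $\xx\in\FX(A)$, observe first that $d(\zz_i,\rbd\X_i)\ge\rho>\dset_n$ for $i\in\I_n$, so $\zz_i\in\X_n$ by \Cref{lm:FY}(2). Put $\ww'_i\eqd(1-s)\xx_i+s\zz_i$ for $i\in\I_n$, again with $s\eqd\tfrac{3\mdset}{2\rho}$. Then $\ww'_i\in\X_i$ by convexity, and a relative ball of radius $s\rho=\tfrac32\mdset>\dset_n$ around $\ww'_i$ lies in $\X_i$, so $\ww'_i\in\X_n$ too by \Cref{lm:FY}(2); moreover $\sum_i\ww'_i=(1-s)\sum_i\xx_i+s\,\zzag\in A$ by convexity of $A$. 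Hence $\ww_n\eqd\bpsi_n(\ww')=\tfrac1{I_n}\sum_{i\in\I_n}\ww'_i$ is a convex combination of points of $\X_n$, so $\ww_n\in\X_n$, and $\sum_n I_n\ww_n=\sum_i\ww'_i\in A$, i.e.\ $\ww\in\FX\esnu(A)$; while $\norm{\ww_n-\bpsi_n(\xx)}=\norm{\tfrac1{I_n}\sum_{i\in\I_n}s(\zz_i-\xx_i)}\le s\cdot 2m$, which is at most $2m I_n\tfrac{\mdset}{\rho}$.

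The main obstacle is the conflict between the two demands on $\ww$: to repair the naive profile so that every coordinate lies in the correct strategy set one must move coordinates — and in part~(1) one is \emph{forced} to project, since $\xx_n$ need not lie in $\aff\X_i$ — and the cumulative effect of these coordinatewise moves on the aggregate can be as large as $I\mdset$, which would destroy $A$-feasibility. The resolution is precisely the pull toward $\zz$: the single convex combination that fixes each coordinate (thanks to $\zz_i$ being deep in $\X_i$) simultaneously restores $A$-feasibility because $\zzag$ sits a distance $\Theta(M)=\Theta(I)$ inside $A$ — the exact scale needed to absorb an $O(I\mdset)$ drift with an $I$-independent step $s=\Theta(\mdset/\rho)$, so that the resulting error bounds do not blow up with $I$. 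Extracting the depth estimate for $\zzag$ (implicit in the proof of \Cref{lm:intprofile}, not in its statement) and tracking the constants to land exactly $4m\mdset/\rho$ and $2mI_n\mdset/\rho$ are the remaining bookkeeping.
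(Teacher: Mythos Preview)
Your Part~(2) is essentially the paper's argument (interpolate toward $\zz$, use \Cref{lm:FY}(2), then average), but your choice $s=\tfrac{3\mdset}{2\rho}$ gives $\|\ww_n-\bpsi_n(\xx)\|\le 2ms=\tfrac{3m\mdset}{\rho}$, which only implies the stated bound $2mI_n\mdset/\rho$ when $I_n\ge 2$. The paper takes $t=\mdset/\rho$ here, yielding $2m\mdset/\rho$ directly; you should use two different steps in the two parts.

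Your Part~(1) is correct but takes a genuinely more laborious route than the paper. You project $\xx_n$ onto $\X_i$ first, which forces you to absorb an aggregate drift of size $O(I\mdset)$; you then rescue $A$-feasibility by extracting from the proof of \Cref{lm:intprofile} the depth estimate $d(\zzag,\rbd A)\ge M\rho/\eta\ge I\rho$ (and you implicitly need $A$ to have nonempty interior for a ball argument, though the paper tacitly assumes this as well). The paper sidesteps the whole obstacle you describe in your last paragraph: it sets $\ww_i=(1-t)\xx_n+t\zz_i$ with $t=2\mdset/\rho$, \emph{without} projecting. Since $\zz_i$ is deep in $\X_i$ it lies in $\X_n$ by \Cref{lm:FY}(2), so $\ww_i\in\X_n$; a short convexity computation gives $d(\ww_i,\rbd\X_n)\ge t(\rho-\mdset)>\mdset$, whence $\ww_i\in\X_i$ by \Cref{lm:FY}(1). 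Crucially, the aggregate is then exactly $(1-t)\xxag+t\zzag$, a convex combination of two points of $A$, so no depth estimate on $\zzag$ is needed at all. Your approach buys the easy ``$\ww_i\in\X_i$ by convexity'' at the price of a nontrivial aggregate correction; the paper pays a bit more to show $\ww_i\in\X_i$ (two applications of \Cref{lm:FY}) and gets $A$-feasibility for free. Also, your claim $\rho\le\tfrac{2m}{3}$ is unjustified (only $\rho\le m$ follows from the definitions); fortunately you only use $\rho\le m$.
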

\begin{proof}[Proof of \Cref{lem:dist_agg_genized_sets}]~\\
(1) For $\xx \in \FX\esnu(A)$, define $\ww\in \FX$ as follows: $\forall n\in\N$, $\forall \i\in \I_n$, let $\ww_i \eqd \xx_n + t(\zz_i - \xx_n )$ where $\zz$ is defined in \Cref{lm:intprofile}, with $t\eqd 2\mdset/\rho<1$. 


On the one hand, $\forall n\in\N$, $\forall \i\in \I_n$, $d(\zz_i, \rbd \X_n) \geq \rho - \mdset$ implies that $d(\ww_i,\rbd \X_n) \geq t(\rho-\mdset)> t\rho/2 =\mdset $.  (This is because each point in the ball with radius $t( \rho-\mdset)$ centered at $\ww_i$ is on the segment linking $\xx_n$ and some point in the ball with radius $\rho-\mdset$ centered at $\zz_i$ which is contained in $\X_i$.) Thus, $\ww_i \in \X_\i$  $\forall \i\in \I_n$ according to \Cref{lm:FY}.(1). 
On the other hand, the linear mapping $S: \rit^{IT} \ni \bv \mapsto \sum_{i\in \I} \bv_i$ maps the segment linking $\psi(\xx)$ and $\zz$ in $\FX(A)$ to a segment linking $\xxag=\sum_n  I_n \xx_n$ and $\zzag$ in the convex $ A$. Hence $\sum_{i\in \I} \ww_i = t\zzag + (1-t)\xxag$ is in $A$ as well. 
Therefore, $\ww\in \FX(A)$. 

Finally, $\|\ww_i - \psi_i(\xx)\| =t\|\zz_i-\psi_i(\xx)\| \leq t2m=  4m \tfrac{\mdset}{\rho}$.

%
%

(2) For $\xx\in \FX(A)$, let $\yy\eqd \xx+t(\zz-\xx)$ with $t\eqd \frac{\mdset}{\rho}$. Then, by similar arguments as above, $d(\yy_i, \rbd \X_i) \geq \mdset$ hence $\yy_i\in \X_n$ and $\bpsi(\yy)\in \FX\esnu$. Besides, $\sum_i \yy_i=t\zzag+(1-t)(\sum_i \xx_i)$ so that $\sum_i \yy_i$ is in the convex $A$. Hence $\ww\eqd \bpsi(\yy)\in \FX\esnu(A)$. Finally, $\|\ww_n-\psi_n(\xx)\|=t\|\sum_{i\in I_n}(\zz_i-\xx_i)\|\leq 2m I_n\frac{\mdset}{\rho}$ .
\end{proof}



Let $\ww\in \FX(A)$ be s.t. $\forall i\in \I$, $\|\ww_i - \psi_i(\hxx)\| \leq 4 m \mdset/\rho$  (cf. \Cref{lem:dist_agg_genized_sets}). Since $\sxx$ is a SVWE in $\GA'$, there is $\g'_i\in \partial (-u_i)(\sxx_i), \forall i\in \I$ s.t. $\sum_{i} \langle \cc(\sxxag)+\g'_i , \ \sxx_\i  - \ww_\i  \rangle \leq 0$. 
Secondly, since $\hxx$ is a SVWE in $\tG(A)$, there is $\bh'_n \in \partial (-u_n)(\hxx_n), \forall n\in\N$ s.t. $\sum_{n} I_n \langle \cc(\hxxag)+ \bh'_n, \hxx_n - \yy_n\rangle \leq 0$ for all $\yy \in \X\esnu(A)$. 
Thirdly,  $\forall n, \forall \i \in \I_n$, by the definition of $\duti_i$, there is $\rr'_i\in \partial (-u_\i)(\hxx_n) $ such that $ \|\rr_i - \bh'_n \|\leq \duti_i$. 

The above results and $\hxx_n \leq m, \forall n$ imply:
\renewcommand{\-}{\hspace{-2pt}-\hspace{-2pt}}
\begin{align}
\nonumber
&\langle\cc(\sxxag)\- \cc(\hxxag),\sxxag\-\hxxag \rangle \+\txt\sum_i \hh  \big\langle  \g'_i\- \rr'_i ,  \sxx_i \-   \hxx_n \big\rangle\\
&= \hh\langle \cc(\sxxag)\hh-\hh\cc(\hxxag), \sxxag \hh- \hh\hxxag\rangle\hh  + \hh\txt\sum_{n,i\in \I_n} \hh\langle   \g'_i\hh- \hh\rr'_i ,  \sxx_i\hh- \hh\hxx_n\rangle \notag \\
&= \hh  \txt\sum_{n,i\in \I_n} \hh\hh\big[\langle \cc(\sxxag)\hh+\hh \g'_i ,  \sxx_i\hh  -\hh \ww_i \rangle \hh + \hh \langle \cc(\sxxag)\hh+\hh \g'_i , \ww_i \hh-\hh \hxx_n  \rangle\big]  \notag  \\
& \quad + \txt\sum_{n,i\in \I_n} \big[ \langle  \rr'_i \hh-\hh\bh'_n  ,  \hxx_n\! -\! \sxx_i  \rangle\hh +\hh\langle \cc(\hxxag)\hh+\hh \bh'_n ,  \hxx_n \!-\! \sxx_i  \rangle \big]  \notag \\
& \leq 0\!+\! \txt\sum_{n,i\in \I_n}  \norm{\cc(\sxxag)+\g'_i} \norm{\ww_i\!-\!\hxx_n}  \! \notag \\ 
& \quad+  \!  \txt\sum_{n,i\in \I_n} \|\rr'_i \!-\! \bh'_n \| \norm{\hxx_n\! -\! \sxx_\i }\! + \! J\notag\\
&  \leq  \Bdf  \, 4M \tfrac{\mdset}{\rho}  +\,2M \mduti+J   \label{eq:th-firstbound}
\end{align}
where $J\eqd  \sum_{n,i\in \I_n} \big\langle  \cc(\hxxag)+\bh'_n , \ \hxx_n - \sxx_i  \big\rangle $.

Next, for SVWE $\sxx\in\FX(A)$, let
 $\yy\in \FX\esnu(A)$ be s.t. $\forall n$, $\|\yy_n-\bpsi_n(\xx)\|_{\N}\leq 2m I_n \mdset/\rho$ (cf. \Cref{lem:dist_agg_genized_sets}). Then
\begin{align}
 J & = \txt\sum_{n\in\N}  \big\langle  \cc(\hxxag)+\bh'_n, \ \hxx_n - \bpsi_n(\sxx)  \big\rangle \notag \\
 & =  \txt\sum_{n\in \N}  \big\langle  \cc(\hxxag)+\bh'_n , \ \hxx_n - \yy_n  \big\rangle \notag \\
 & \quad +  \txt\sum_{n\in\N}  \big\langle \cc(\hxxag)+\bh'_n , \ \yy_n - \bpsi_n(\sxx)    \big\rangle \notag \\
 &   \leq 0 + \txt \sum_{n\in\N} \Bdf\|\bpsi_n(\sxx)- \yy_n\|  \notag \\ 
&  \leq   \Bdf2M\tfrac{\mdset}{\rho} \, ,\label{eq:Jnu}
\end{align}

Let us summarize by combining \eqref{eq:th-firstbound} and \eqref{eq:Jnu}:
\begin{equation}\label{eq:unpperboundmono}
\begin{aligned}  
&\langle\cc(\sxxag)\hh-\hh\cc(\hxxag),\sxxag\hh-\hh\hxxag \rangle \hh + \hh \txt\sum_i \langle  \g'_i\hh- \hh\rr'_i ,  \sxx_i\hh- \hh \hxx_n \rangle \notag \\
&\leq 2M \left( 3 \tfrac{\Bdf}{\rho} \mdset +   \mduti \right) \ .
\end{aligned}
\end{equation}
Hence, if $H'$ is strongly monotone with modulus $\stgccvut$, then 
\begin{align*}
 & \stgccvut  \hh\norm{ \psi(\hxx) \-  \sxx }^2  \hh\leq \hh \txt\sum_i \langle  \g'_i\hh- \hh\rr'_i ,  \sxx_i\hh- \hh \hxx_n \rangle\hh \leq \hh2M \big( 3 \tfrac{\Bdf}{\rho} \mdset \hh+  \hh \mduti \big) \, .
 \end{align*}
 If $H'$ is aggregatively strongly monotone with modulus $\beta$, then 
 \begin{equation*}
  \beta \| \hxxag \hh- \hh\sxxag\!\|^2\hh\leq \hh \langle\cc(\sxxag\!)\hh-\hh\cc(\hxxag),\sxxag\hh-\hh\hxxag \rangle \hh\leq \hh 2M \big( 3 \tfrac{\Bdf}{\rho} \mdset \hh + \hh \mduti \big) \, .
  \end{equation*}


\begin{small}
\bibliographystyle{IEEEtran}
\bibliography{../../bib/shortJournalNames,../../bib/biblio1,../../bib/biblio2,../../bib/biblio3,../../bib/biblioBooks}
\end{small}
\end{document}